\documentclass[11pt]{amsart}
\usepackage{fullpage}
\usepackage[foot]{amsaddr}
\usepackage{microtype}
\usepackage[OT1]{fontenc}
\usepackage{amsmath}
\usepackage{amssymb}
\usepackage{amsthm}
\usepackage{thmtools}
\makeatletter
\@ifundefined{newcounteralias}{}{\renewcommand\thmt@autorefsetup{\@xa\def\csname\thmt@envname autorefname\@xa\endcsname\@xa{\thmt@thmname}}}
\makeatother
\usepackage[linesnumbered,boxed,ruled,vlined]{algorithm2e}
\usepackage{algpseudocode}
\usepackage{enumitem}
\usepackage{xifthen}
\usepackage{xspace}

\usepackage[margin=1cm]{caption}
\usepackage{subfig}

\usepackage[thinlines]{easytable}

\usepackage[bookmarks=true,hypertexnames=false,pagebackref]{hyperref}
\hypersetup{colorlinks=true, citecolor=blue, linkcolor=red, urlcolor=blue}

\usepackage{pgfplots}
\pgfplotsset{compat=1.16}
\usepackage{tikz}
\usetikzlibrary{arrows,arrows.meta,backgrounds,calc,fit,decorations.pathreplacing,decorations.markings,shapes.geometric}

\tikzstyle{internal} = [draw, fill, shape=circle]
\tikzstyle{external} = [shape=circle]
\tikzstyle{square}   = [draw, fill, rectangle]
\tikzstyle{triangle} = [draw, fill, regular polygon, regular polygon sides=3, inner sep=3pt]
\tikzstyle{pentagon} = [draw, fill, regular polygon, regular polygon sides=5, inner sep=2pt, minimum size=14pt]
\tikzset{every fit/.append style=text badly centered}

\usetikzlibrary{positioning,chains,fit,shapes,calc}
\usetikzlibrary{trees}
\usetikzlibrary{decorations.pathreplacing}
\usetikzlibrary{decorations.pathmorphing}
\usetikzlibrary{decorations.markings}
\tikzset{>=latex}

\usepackage{ifthen}

\usepackage{cleveref}

\usepackage[textsize=tiny]{todonotes}

\usepackage[normalem]{ulem}

\usepackage{mleftright}

\usepackage{cool}
\Style{DSymb={\mathrm d},DShorten=true,IntegrateDifferentialDSymb=\mathrm{d}}

\newcommand{\Ex}{\mathop{\mathbb{{}E}}\nolimits}
\renewcommand{\Pr}{\mathop{\mathrm{Pr}}\nolimits}

\def\*#1{\mathbf{#1}}
\def\+#1{\mathcal{#1}}
\def\-#1{\mathrm{#1}}
\def\=#1{\mathbb{#1}}
\def\^#1{\mathbb{#1}}

\newcommand{\norm}[2]{\ensuremath{\Vert #2 \Vert_{#1}}}
\newcommand{\abs}[1]{\ensuremath{\left\vert#1\right\vert}}

\newcommand{\eps}{\varepsilon}

\newcommand{\Diri}[2]{\ensuremath{\+E_{#1}\left(#2\right)}}

\newcommand{\Var}[2]{\ensuremath{\textnormal{Var}_{#1}\left(#2\right)}}

\newcommand{\diag}{\operatorname{diag}}

\newtheorem{theorem}{Theorem}

\newtheorem{lemma}[theorem]{Lemma}

\newtheorem{proposition}[theorem]{Proposition}
\newtheorem{corollary}[theorem]{Corollary}
\theoremstyle{definition}

\newtheorem{definition}[theorem]{Definition}
\newtheorem{example}{Example}
\theoremstyle{remark}
\newtheorem*{remark}{Remark}

\crefname{theorem}{Theorem}{Theorems}
\crefname{observation}{Observation}{Observations}
\crefname{claim}{Claim}{Claims}
\crefname{condition}{Condition}{Conditions}
\crefname{algorithm}{Algorithm}{Algorithms}
\crefname{property}{Property}{Properties}
\crefname{example}{Example}{Examples}
\crefname{fact}{Fact}{Facts}
\crefname{lemma}{Lemma}{Lemmas}
\crefname{corollary}{Corollary}{Corollaries}
\crefname{definition}{Definition}{Definitions}
\crefname{remark}{Remark}{Remarks}
\crefname{proposition}{Proposition}{Propositions}
\crefname{equation}{equation}{equations}
\crefname{enumi}{Case}{Case}
\creflabelformat{enumi}{(#2#1#3)}

\makeatletter
\def\prob#1#2#3{\goodbreak\begin{list}{}{\labelwidth\z@ \itemindent-\leftmargin
      \itemsep\z@  \topsep6\p@\@plus6\p@
      \let\makelabel\descriptionlabel}
  \item[\textbf{Name}]#1
  \item[\textbf{Instance}]#2
  \item[\textbf{Output}]#3
  \end{list}}
\makeatother

\makeatletter
\providecommand\@dotsep{5}
\def\listtodoname{Todo list}
\def\listoftodos{\@starttoc{tdo}\listtodoname}
\makeatother

\usepackage{nicefrac,comment}

\newcommand{\Cov}{\operatorname{Cov}}

\newcommand{\gap}{\operatorname{gap}}
\newcommand{\one}{\mathbf{1}}

\numberwithin{equation}{section}
\title{Degree-Free Spectral Independence \\ for Log-Concave Holant Measures}
\author{Xiaoyu Chen}
\address[Xiaoyu Chen]{\textnormal{\textit{Email}: \texttt{xiaoyu@mit.edu}.} Massachusetts Institute of Technology}
\thanks{Xiaoyu Chen is supported by the NSF CAREER grant CCF2443045 and the Reed Fund at MIT}
\author{Zejia Chen}
\address[Zejia Chen]{\textnormal{\textit{Email}: \texttt{zchen3091@gatech.edu}.} Georgia Institute of Technology}
\author{Xinyuan Zhang}
\address[Xinyuan Zhang]{\textnormal{\textit{Email}: \texttt{zhangxy@smail.nju.edu.cn}.} Nanjing University}
\date{}
\begin{document}

\begin{abstract}
We establish a degree-independent bound on spectral independence for log-concave Holant problems on simple graphs. As a corollary, we obtain relaxation-time bounds for Glauber dynamics of $O_\lambda(m)$ for the monomer--dimer model at activity $\lambda$ and $O_{b,\lambda}(m)$ for $b$-matchings at fugacity $\lambda>0$, where $m$ is the number of edges. For uniform $b$-matchings, the relaxation-time bound improves to $O(bm)$.
The main proof ideas were found using GPT-5.6 Sol.
\end{abstract}
\maketitle

\tableofcontents

\newpage
\section{Introduction}\label{sec:model}

Sampling combinatorial structures is a central problem in randomized algorithms. Many models on graphs can be naturally described in terms of random variables associated with either vertices or edges. In the first class, configurations assign states to vertices, with interactions between adjacent vertices; familiar examples include the hard-core model~\cite{LV97}, the Ising model~\cite{Fos59,EFH60}, and proper graph colorings~\cite{Jer95}. In the second class, configurations assign states to edges, with local interactions or constraints involving the edges incident to a common vertex. Examples include the monomer--dimer model~\cite{JS89}, $b$-matchings~\cite{HLZ16}, $b$-edge covers~\cite{HLZ16}, and the even-subgraph model~\cite{JS93}. The Holant framework provides a common language for this second class, expressing a configuration's weight as a product of local functions at the vertices. In this paper, we study efficient sampling algorithms within this framework, focusing on Holant measures with log-concave local signatures.

Let $G=(V,E)$ be a finite simple undirected graph, with $d_v=\deg_G(v)$ and $\Delta=\max_v d_v$. A \emph{binary symmetric Holant instance} $(G,\mathbf f,\boldsymbol\lambda)$ consists of positive edge activities $\boldsymbol\lambda=(\lambda_e)_{e\in E}$ and nonnegative signatures $\mathbf f=(f_v)_{v\in V}$, where $f_v:\{0,\ldots,d_v\}\to\mathbb R_{\ge0}$. Its partition function $Z=Z_{G,\mathbf f,\boldsymbol\lambda}$ and probability measure $\mu=\mu_{G,\mathbf f,\boldsymbol\lambda}$ are
\begin{equation}\label{eq:holant}
 Z=\sum_{S\subseteq E}\prod_vf_v(D_v(S))\prod_{e\in S}\lambda_e \quad \text{and} \quad
 \mu(S)=\frac1Z\prod_vf_v(D_v(S))\prod_{e\in S}\lambda_e,
\end{equation}
where $D_v(S)$ denotes the number of edges in $S$ incident to $v$.

Following Chen and Gu~\cite{CG24}, we focus on log-concave signatures, defined as follows.

\begin{definition}[Log-concave signatures]\label{def:lc}
Let $f=[f(0),f(1),\ldots,f(d)]$ be a sequence with nonnegative entries. We call $f$ a \emph{log-concave signature} when both of the following hold:
\begin{enumerate}[label=(\alph*)]
\item \emph{Log-concavity:} for each $k\in\{1,\ldots,d-1\}$, we have $f(k)^2\ge f(k-1)f(k+1)$.
\item \emph{No internal zeros:} for any $0\le i<j\le d$ with $f(i),f(j)>0$, every entry $f(k)$ with $i\le k\le j$ is positive.
\end{enumerate}
\end{definition}

Symmetric Holant problems with log-concave signatures encompass a wide range of models.
We illustrate this framework with matching models, which are the main applications in this paper.

\begin{example}[Matching models]\label{example:matching-cover}
Let $G=(V,E)$ be a simple graph.
Fix a common edge activity $\lambda>0$, so that
$\boldsymbol{\lambda}=\lambda\mathbf{1}$.
\begin{itemize}
   \item \emph{The monomer--dimer model.}
    Taking $b_v=1$ for every $v\in V$ gives the monomer--dimer
    model at fugacity $\lambda$, namely the distribution on ordinary
    matchings $M$ with probability proportional to $\lambda^{|M|}$.

    \item \emph{$\mathbf{b}$-matchings.}
    Given integer capacities $\mathbf{b}=(b_v)_{v\in V}$ for each vertex $v\in V$, define the signatures
    $$
        f_v(i)=\mathbf{1}[i\le b_v], \qquad 0\le i\le d_v.
    $$
    This induces the Gibbs measure
    $$
      \mu(S)\propto \mathbf{1}[S\in\mathcal{M}_{\mathbf{b}}]\lambda^{|S|},
    $$
    where
    $$
      \mathcal{M}_{\mathbf{b}} :=\{S\subseteq E : D_v(S)\le b_v, \forall v\in V\}.
    $$
    In particular, $\lambda=1$ gives the uniform measure on
    $\mathcal{M}_{\mathbf{b}}$.

\end{itemize}
\end{example}

A standard approach to sampling is the Markov chain Monte Carlo (MCMC) method, which simulates an ergodic Markov chain whose stationary distribution is the target measure. A basic example is Glauber dynamics, which at each step selects a variable uniformly at random and resamples it from its conditional distribution given all the other variables. The efficiency of this approach is governed by the mixing time: the number of steps required for the chain to be within a prescribed total variation distance of stationarity, starting from any initial state~\cite{LP17}. In recent years, spectral independence, introduced by Anari, Liu, and Oveis Gharan~\cite{ALO20}, has emerged as a general framework for establishing efficient sampling algorithms~\cite{CLV20,CLV21,CFYZ21,CFYZ22,CE22,CZ23,CG24,CCYZ25,CCCYZ25,CSV26}. The central idea is to establish weak correlations among variables, quantified by spectral bounds on influence matrices, and then use local-to-global arguments to derive rapid mixing of the Glauber dynamics. We recall the definition of spectral independence below.

\begin{definition}[Spectral independence~\cite{ALO20}]\label{def:si}
For any feasible pinning $\tau$, define the influence matrix $\Phi_\mu^\tau$ on the remaining non-deterministic edges by
\begin{equation}\label{eq:influence}
 \Phi_\mu^\tau(e,f)=\mu^\tau(f\in S\mid e\in S)-\mu^\tau(f\in S\mid e\notin S)
 \quad(e\ne f),\qquad \Phi_\mu^\tau(e,e)=0.
\end{equation}
The law $\mu$ is $\eta$-\emph{spectrally independent} if $\lambda_{\max}(\Phi_\mu^\tau)\le\eta$ for every feasible pinning $\tau$.
\end{definition}

For log-concave Holant measures on bounded-degree graphs, Chen and Gu~\cite{CG24} established constant spectral independence by proving coupling independence via a recursive coupling, building on the approach of Chen and Zhang~\cite{CZ23}. Their spectral independence bound is as follows.
\begin{theorem}[Chen and Gu~{\cite[Theorem~13]{CG24}}]\label{thm:cg}
Let $(G,\mathbf f,\boldsymbol\lambda)$ be a Holant instance on a graph of maximum degree $\Delta$ with $E\ne\varnothing$, positive edge activities, and log-concave signatures satisfying $f_v(0)>0$. Define the normalized generating polynomials and parameters
\begin{equation}\label{eq:cg-parameters}
 \begin{gathered}
 P_{f_v}(x)=\frac1{f_v(0)}\sum_{k=0}^{d_v}\binom{d_v}{k}f_v(k)x^k,\\
 r_{\max}=\max_{v:d_v\ge1}\frac{f_v(1)}{f_v(0)},\qquad
 \lambda_{\max}=\max_{e\in E}\lambda_e,\qquad
 P_{\max}=\max_{v\in V}P_{f_v}(r_{\max}\lambda_{\max}).
 \end{gathered}
\end{equation}
Then $\mu$ is $2(P_{\max}-1)$-spectrally independent. In particular, after every feasible pinning $\tau$,
\begin{equation}\label{eq:cg-si}
 \lambda_{\max}(\Phi_\mu^\tau)\le2(P_{\max}-1)
 \le2\bigl((1+r_{\max}^2\lambda_{\max})^\Delta-1\bigr).
\end{equation}
\end{theorem}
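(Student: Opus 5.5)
The plan follows the Chen--Zhang / Chen--Gu strategy: bound spectral independence through \emph{coupling independence}. Fix a feasible pinning $\tau$. A pinned edge only lowers the residual capacity at its endpoints, so we can absorb it into shifted signatures $f_v(\cdot+\text{\#pinned-in edges at }v)$ on the graph with that edge deleted. Shifting preserves log-concavity and the no-internal-zeros property. So it suffices to treat $\tau=\varnothing$ for an arbitrary instance of the same class. The one issue is that $f_v(0)>0$ must survive the reduction; after restricting to the support, the shifted sequences still start at a positive entry. We then use the standard fact that $\lambda_{\max}(\Phi_\mu)\le\max_e\sum_{f}|\Phi_\mu(e,f)|$. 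For each edge $e$ this row sum is at most twice the expected Hamming distance $\mathbb E|S\oplus S'|$ under any coupling of $\mu(\cdot\mid e\in S)$ and $\mu(\cdot\mid e\notin S)$. This accounts for the factor $2$, and it reduces the theorem to exhibiting such a coupling with $\mathbb E|S\oplus S'|-1\le P_{\max}-1$.

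Next we construct the coupling recursively, along a path of disagreements. Let $e=uw$. Conditioning on $e\in S$ versus $e\notin S$ changes only the effective signatures at $u$ and $w$, by a shift of one. At $u$, we couple the two conditional laws of the remaining edges incident to $u$. Log-concavity makes the law of $D_u$ stochastically monotone under the shift, so we can couple so that the two configurations at $u$ differ in at most one further edge $e_1=uv_1$. We then recurse at $v_1$, and symmetrically on the $w$ side. The discrepancy therefore propagates as a self-avoiding walk, or a branching process of at most one child per step on each side. At each step the probability of creating a new disagreement at an incident edge $g$ is at most the marginal $\Pr[g\in S\mid\cdot]\le r_{\max}\lambda_{\max}\cdot(\text{ratio})$. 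We can bound this uniformly by comparing with the signature at $0$ and using the log-concave bound $f(k+1)/f(k)\le f(1)/f(0)\le r_{\max}$.

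The expected length is then controlled by summing over self-avoiding paths from $e$. Paths whose $k$ new edges leave a vertex $v$ contribute at most $\binom{d_v}{k}f_v(k)(r_{\max}\lambda_{\max})^k/f_v(0)$. This is exactly why $P_{f_v}(r_{\max}\lambda_{\max})-1$ appears as the expected number of extra discrepancies generated per vertex. Getting this to telescope into $P_{\max}-1$ overall, rather than a product over path length, requires a potential argument. Each vertex's contribution must be charged so that the recursion closes, meaning the expected offspring weighted by $P$ is contracting.

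I expect the main obstacle to be this last step: making the recursive coupling's cost bounded by $P_{\max}-1$ without degree blowup, i.e.\ verifying the contraction inequality at each vertex. The second inequality in \eqref{eq:cg-si} is routine. Log-concavity gives $f_v(k)/f_v(0)\le r_{\max}^k$, so
\[P_{f_v}(x)\le\sum_k\binom{d_v}{k}(r_{\max}x)^k=(1+r_{\max}x)^{d_v}\le(1+r_{\max}^2\lambda_{\max})^\Delta.\]
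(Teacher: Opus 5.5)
The paper does not prove this statement; it quotes it from Chen--Gu. The only trace of the argument here is the sketch of their recursive coupling in the proof of \cref{obs:coupling}: a single active defect moves along a trail that may revisit vertices. Your architecture matches that, and your derivation of the second inequality in \eqref{eq:cg-si} is correct. However, the reduction is miscounted and the quantitative core is missing.

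With the indicator convention of \eqref{eq:influence}, a coupling $(S,S')$ of $\mu(\cdot\mid e\in S)$ and $\mu(\cdot\mid e\notin S)$ gives $\sum_{f\ne e}|\Phi_\mu(e,f)|\le\Ex|S\oplus S'|-1$, with no factor $2$. Your bound $2\Ex|S\oplus S'|$ is always at least $2$, so it cannot give $2(P_{\max}-1)$ when $P_{\max}<2$. The target you then set, $\Ex|S\oplus S'|-1\le P_{\max}-1$, is false. Take the monomer--dimer model on an edge $e=uw$ with two pendant edges at each of $u$ and $w$. Then $S=\{e\}$ is forced, so every coupling has $\Ex|S\oplus S'|-1=4\lambda/(1+2\lambda)$, which exceeds $P_{\max}-1=3\lambda$ for $\lambda<1/6$.

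The factor $2$ comes from the two endpoints: conditioning on $X_e$ shifts both $f_u$ and $f_w$, so there are two defects. For instance, glue two single-shift couplings through the instance with $e$ deleted and only $f_u$ shifted; it has positive partition function since $f_u(1)>0$. Each defect must then be shown to cost at most $P_{\max}-1$. Your pinning reduction also needs the residual parameters to be dominated by the original $r_{\max}$ and $P_{\max}$. This holds because $f_v(c+k)/f_v(c)\le f_v(k)/f_v(0)$ by log-concavity and degrees only decrease, but it must be said.

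The step you defer is the whole content of the theorem, and the route you sketch would not close it. Summing per-edge marginals $\Pr[g\in S\mid\cdot]\le r_{\max}\lambda_{\max}\cdots$ over paths reintroduces a factor of order $d_v$ at each step. Also, the index $k$ in $P_{f_v}$ counts occupied edges at $v$, not branches of the walk. The missing idea is to bound the probability that the defect moves at all.

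In the monotone coupling at the active vertex $v$ (this is where log-concavity is used), a new disagreement can occur only on an edge occupied in the copy with more room at $v$. So the continuation probability is at most $1-\Pr[\text{no remaining edge at }v\text{ is occupied}]$ in that copy. Condition on everything outside the star of $v$, and let $c$ be the number of already occupied edges at $v$. The remaining edges at $v$ then follow a law proportional to $f_v(c+|J|)\prod_{g\in J}w_g$, with $w_{vu}\le\lambda_{\max}f_u(1)/f_u(0)\le r_{\max}\lambda_{\max}$ by log-concavity at $u$. Since also $f_v(c+k)/f_v(c)\le f_v(k)/f_v(0)$, the probability that $J=\varnothing$ is at least $f_v(0)/\sum_k\binom{d_v}{k}f_v(k)(r_{\max}\lambda_{\max})^k=1/P_{f_v}(r_{\max}\lambda_{\max})\ge1/P_{\max}$, uniformly in the history of the coupling.

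Hence each trail continues with probability at most $1-1/P_{\max}$ at every step. Its expected number of disagreeing edges is therefore at most $\sum_{k\ge1}(1-1/P_{\max})^k=P_{\max}-1$, and the two trails give $2(P_{\max}-1)$. No potential function is needed; the bound is a geometric tail.
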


Their result yields spectral independence bounds for a range of matching models. For the monomer--dimer model at activity $\lambda>0$, \cref{thm:cg} gives a spectral independence bound of $2\Delta\lambda$. Together with their marginal bounds and the local-to-global argument, this yields an $O_{\Delta,\lambda}(n\log n)$ mixing time for Glauber dynamics, where $n=|V|$~\cite[Theorem~1]{CG24}. For uniform $b$-matchings, their spectral independence bound is $2\sum_{k=1}^{\min\{b,\Delta\}}\binom{\Delta}{k}=O_b(\Delta^b)$, and the corresponding mixing-time bound is $O_\Delta(n\log n)$~\cite[Theorem~1 and Remark~22]{CG24}.

Despite their success on bounded-degree graphs, the coupling independence, or total influence (i.e., $\norm{\infty}{\Phi_\mu}$), inherently depends on the maximum degree, even for the monomer--dimer model. On the infinite $\Delta$-regular tree, the coupling independence and total influence are $\Theta_\lambda(\sqrt{\Delta})$ at fixed activity $\lambda>0$~\cite[Remark~6.5]{CLV21}, and finite truncations inherit this lower bound as their depth grows.

In this paper, we establish the following degree-independent spectral independence bounds for log-concave Holant measures.
Following the vertex-threshold notation in \cite{CG24}, let $b_v=\max\{k:f_v(k)>0\}$ and $\mathbf b=(b_v)_{v\in V}$. Our general theorem assumes $b_v\le b$ and bounds on the insertion odds
\begin{equation}\label{eq:ratios}
 r_v(k)=\frac{f_v(k+1)}{f_v(k)}\quad(0\le k<b_v),\qquad r_v(b_v)=0,
 \qquad \theta_{uv}(i,j)=\lambda_{uv}r_u(i)r_v(j).
\end{equation}
These odds are intrinsic: they remain unchanged by $f_v(k)\mapsto a_v^kf_v(k)$ and $\lambda_{uv}\mapsto\lambda_{uv}/(a_ua_v)$.

\begin{theorem}[Degree-free Holant spectral independence]\label{thm:holant}
Let $b\ge1$ be an integer and $0<\gamma\le\Theta<\infty$. Suppose $G$ is simple, every signature is log-concave with $f_v(0)>0$ and $b_v\le b$, and
\begin{equation}\label{eq:odds-assumption}
 \gamma\le\theta_{uv}(i,j)\le\Theta
 \qquad(uv\in E,\ 0\le i<b_u,\ 0\le j<b_v).
\end{equation}
Set
\begin{equation}\label{eq:main-constant}
 L=L_b(\Theta)=\max\{\Theta,\Theta^b\},\qquad
 T=T_b(\gamma,\Theta)=\frac{L_b(\Theta)}{\min\{1,\gamma\}}.
\end{equation}
There is an absolute constant $C$ such that, after every feasible pinning $\tau$,
\begin{equation}\label{eq:holant-main}
 \lambda_{\max}(\Phi_\mu^\tau)\le C(1+\Theta)\bigl[1+\Theta+(b-1)T^2\bigr].
\end{equation}
The bound is independent of the underlying graph and the maximum degree.
\end{theorem}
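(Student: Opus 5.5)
The plan is to bound $\lambda_{\max}(\Phi_\mu^\tau)$ through a weighted row-sum (Schur-test) bound that is insensitive to degree. We do not control the plain row sum $\norm{\infty}{\Phi}$, which is $\Theta(\sqrt{\Delta})$. First we reduce to the unpinned case. Pinning an edge to $0$ deletes it. Pinning it to $1$ shifts the signatures at its endpoints, $f_v(k)\mapsto f_v(k+1)$, which preserves log-concavity, decreases $b_v$, and leaves the insertion odds in the window $[\gamma,\Theta]$. So the class of instances is closed under pinning, and it suffices to treat $\tau=\varnothing$.

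Since $\mu$ is a product of log-concave signatures, its generating polynomial is real stable/Lorentzian after the standard symmetrization. In particular $\Phi_\mu$ is similar, via $D=\diag(p_e(1-p_e))^{1/2}$, to a matrix $D^{-1}\Cov(\mathbf 1_S)D^{-1}$ shifted by the identity. Hence $\lambda_{\max}(\Phi_\mu)$ is the largest eigenvalue of the correlation matrix minus one, and for any positive weights $w$ it is at most $\max_e \sum_{f} |\Phi(e,f)|\,w_f/w_e$. We choose $w_e=\sqrt{p_e(1-p_e)}$, or more simply $w_e\approx p_e$, where $p_e=\mu(e\in S)$.

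The key estimates are the following.

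\begin{enumerate}
\item \emph{Marginal bounds.} Using the odds window we show $p_{uv}\asymp \min\{1,\ \Theta/(1+\text{load})\}$. Here the load at an endpoint $v$ counts competing neighbours: roughly $p_e \lesssim \min\{1,\Theta\, b_v/d_v\}$, and there is a lower bound involving $\gamma$ and $L$. This is where the factor $T=L/\min\{1,\gamma\}$ enters.
\item \emph{Influence decay.} We bound $|\Phi(e,f)|$ by a sum over self-avoiding paths from $e$ to $f$ in the line graph. Each step contributes a factor controlled by the conditional marginal of the next edge, via the recursive coupling of \cite{CG24} (Theorem~\ref{thm:cg}), but now with weighted counting. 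A step through a vertex $v$ of degree $d_v$ has $d_v$ choices, each of weight $\lesssim \Theta/d_v$, up to the capacity $b_v$.
\item \emph{Summation.} After reweighting by $w$, the path sum is a branching process whose mean offspring is bounded by an absolute constant times $\Theta$. For $b=1$ the terms telescope, as in the monomer--dimer correlation-decay argument. For $b\ge2$ each extra unit of capacity adds a term of order $T^2$. Together these give $C(1+\Theta)[1+\Theta+(b-1)T^2]$.
\end{enumerate}

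The main obstacle is step 2 for $b\ge2$. There the simple ``one edge per vertex'' structure of matchings is lost, and the influence through a vertex with residual capacity need not be small per neighbour. Our first attempt would be a vertex-splitting reduction: realize each log-concave $f_v$ with $b_v\le b$ as a projection of a measure in which $v$ is replaced by $b$ copies with matching constraints. Log-concavity, via Newton/real-rootedness of the truncated sequence, would supply the nonnegative mixing weights. This reduces the problem to a monomer--dimer-type instance with degree-free path counting, and each of the $b-1$ extra copies costs a ratio of odds bounded by $T$ on each of the two endpoints, hence the factor $T^2$.

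If exact splitting fails for general log-concave sequences, the fallback is to track directly the two-point ratio $\mu(D_v=k+1)/\mu(D_v=k)$ inside the recursion. The assumption $\gamma\le\theta\le\Theta$ keeps this ratio between $\gamma/L$ and $L/\gamma$ uniformly in $d_v$.
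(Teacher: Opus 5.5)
\textbf{There is a genuine gap.} It lies in your reduction to a weighted row-sum bound, not only in the $b\ge2$ step you flag.

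Every estimate you propose controls $\max_e\sum_f|\Phi_\mu(e,f)|\,w_f/w_e$. The infimum of this over all positive weights is the Perron root $\rho(|\Phi_\mu|)$ of the entrywise absolute value, and that quantity is not degree-free.

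Take monomer--dimer at fixed $\lambda$ on a deep truncation of the $\Delta$-regular tree. The recursion $R=1/(1+\lambda(\Delta-1)R)$ has derivative mass $\lambda(\Delta-1)R^2=1-R$ per level, with $R\approx(\lambda\Delta)^{-1/2}$. So the total absolute influence of a bulk edge on the edges at distance $k$ is of order $(1-R)^k$, and these sum to $\Theta(\sqrt{\lambda\Delta})$. The bulk marginals are essentially constant, so your weights $w_e\approx p_e$ or $\sqrt{p_e(1-p_e)}$ are too, and your weighted row sum is just this total influence.

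On $\Delta$-regular graphs of large girth the same local computation should apply, and then even optimal weights cannot help. Indeed $\rho(|\Phi_\mu|)+1=\rho(|\mathrm{Corr}(X)|)$, which is at least the average absolute row sum of the correlation matrix.

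Consequences for your plan:
\begin{itemize}
\item Step 3 fails already for $b=1$. The ``branching process'' has mean offspring $1-\Theta(\Delta^{-1/2})$, and nothing telescopes in absolute value.
\item Step 2, which runs Chen--Gu's coupling at the edge level, bounds exactly this total influence, and that is $\Omega(\sqrt{\Delta})$.
\item The vertex-splitting fallback lands in monomer--dimer and hits the same wall.
\end{itemize}

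Two smaller issues:
\begin{itemize}
\item The edge generating polynomial of matchings is neither real stable nor Lorentzian, since its support is not M-convex. The similarity $\Phi_\mu+I=\diag(\mathrm{Var}\,X_e)^{-1}\Cov(X)$ that you actually use holds for every measure, so this claim is unnecessary.
\item The marginal bound $p_e\lesssim\Theta b_v/d_v$ fails when the other neighbours of $v$ are saturated. The correct degree-free bound is $p_e\le\Theta q_v$.
\end{itemize}

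\textbf{What the paper uses instead.} The degree-free bound depends on the alternating signs of influences, which any absolute path sum discards, so a different mechanism is needed. The paper proceeds in three steps.
\begin{itemize}
\item In the Chen--Gu recursive coupling, the disagreement set is an alternating trail whose effect on occupied degrees cancels at every internal vertex. So the degree vector has coupling independence with constant $2$, which gives $\mathrm{Var}(\sum_vh_vD_v)\le2b^2\sum_vq_vh_v^2$.
\item Edge sums are transferred to degrees through harmonic surrogates $K_e=\phi_e(D_u,D_v)$ with $\Ex[K_e\mid X_{-e}]=\Ex[X_e\mid X_{-e}]$, plus a random-revealing estimate. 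This yields $\mathrm{Var}(\sum_ea_eX_e)\le C_1\sum_ep_ea_e^2+C_2\sum_v(\sum_{e\ni v}p_ea_e)^2/q_v$.
\item A Schur complement on the joint covariance of $(X,D)$ then forces every generalized eigenvalue $t$ of $(\Cov(X),\diag(p_e))$ to satisfy $t^3\le C_1t^2+2C_2b^2t$.
\end{itemize}
The whole argument works with variances of linear statistics, never with row sums of $|\Phi_\mu|$.
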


\begin{remark}
For the monomer--dimer model at activity $\lambda>0$, \cref{thm:holant} gives $O((1+\lambda)^2)$ spectral independence on every finite simple graph: take $b=1$ and $\gamma=\Theta=\lambda$. This bound holds after every feasible pinning and is independent of the maximum degree for each fixed activity. It answers the question posed by Chen, Liu, and Vigoda~\cite[Section~8]{CLV21} of improving the spectral independence bound for the monomer--dimer model beyond the bound obtained by controlling total influence.
\end{remark}

As a corollary, we have the following mixing results for matching models.

\begin{corollary}[Matching models]\label{cor:matchings}
There is an absolute constant $C$ such that the following holds on every finite simple graph with $n=|V|$ and $m=|E|\ge1$, for $0<\varepsilon<1/2$.
\begin{enumerate}[label=(\roman*)]
\item \emph{Monomer--dimer model.} At fugacity $\lambda>0$, the measure is $C(1+\lambda)^2$-spectrally independent. Its Glauber dynamics satisfies
\begin{equation}\label{eq:monomer-relaxation}
 t_{\mathrm{rel}}=O_\lambda(m), \qquad
 t_{\mathrm{mix}}(\varepsilon)=O_\lambda\!\left(m\left(n\log n+\log\frac1\varepsilon\right)\right).
\end{equation}
\item \emph{$b$-matchings.} At fugacity $\lambda>0$, the measure with integer capacities $0\le b_v\le b$ and $b\ge1$ is $C(1+\lambda)\bigl[1+\lambda+(b-1)\max\{1,\lambda^{2b}\}\bigr]$-spectrally independent. Its Glauber dynamics satisfies
\begin{equation}\label{eq:b-matching-relaxation}
 t_{\mathrm{rel}}=O_{b,\lambda}(m),\qquad
 t_{\mathrm{mix}}(\varepsilon)=O_{b,\lambda}\!\left(m\left(\min\{m,bn\log n\}+\log\frac1\varepsilon\right)\right).
\end{equation}
In particular, uniform $b$-matchings are $Cb$-spectrally independent, and their Glauber dynamics satisfies
\begin{equation}\label{eq:uniform-relaxation}
 t_{\mathrm{rel}}=O(bm),\qquad
 t_{\mathrm{mix}}(\varepsilon)=O\!\left(bm\left(\min\{m,bn\log n\}+\log\frac1\varepsilon\right)\right).
\end{equation}
\end{enumerate}
\end{corollary}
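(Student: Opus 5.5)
We derive \cref{cor:matchings} from \cref{thm:holant} by computing $\gamma,\Theta$ for each model, and then apply the standard local-to-global machinery (spectral independence plus marginal bounds implying spectral gap and entropy decay) to get the dynamics bounds.

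\textbf{Spectral independence.} For $b$-matchings, $f_v(k)=\mathbf 1[k\le b_v]$, so $r_v(k)=1$ for $k<b_v$ and $r_v(b_v)=0$. Hence $\theta_{uv}(i,j)=\lambda$ on the relevant range, and we may take $\gamma=\Theta=\lambda$.

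Vertices with $b_v=0$ force all incident edges to be absent. We remove them via a pinning, which is harmless because \cref{thm:holant} holds after every feasible pinning. Signatures truncated at $d_v<b_v$ are still log-concave with $b_v\le b$.

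Substituting into \eqref{eq:holant-main} gives $L=\max\{\lambda,\lambda^b\}$ and $T=L/\min\{1,\lambda\}$.

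\begin{itemize}
\item If $\lambda\ge1$, then $T^2=\lambda^{2b}$.
\item If $\lambda<1$, then $T=1$.
\end{itemize}

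So $T^2=\max\{1,\lambda^{2b}\}$, which yields the stated bound. Setting $b=1$ gives $C(1+\lambda)^2$ for the monomer--dimer model. Setting $\lambda=1$ gives $C\cdot2(2+(b-1))=O(b)$ for uniform $b$-matchings.

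\textbf{Relaxation time.} We use the spectral-independence-to-spectral-gap theorem for Glauber dynamics on $m$ variables. Its general form, with $\eta$-spectral independence under all pinnings, gives a gap of at least $\frac1m\prod_{k}(1-\frac{\eta}{m-k})$. That product is only polynomial in $m$, so it does not directly give $O(m)$.

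Instead, we combine spectral independence with $b$-marginal boundedness. Under any pinning, each free edge is present with conditional probability at least $c=c(b,\lambda)>0$ and absent with probability at least $c$.

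\begin{itemize}
\item An edge can be added when both endpoints have residual capacity; its odds are then $\lambda$.
\item Otherwise the edge is deterministic under that pinning, so it is excluded from the free variables.
\item Lower bounds such as $\min\{\lambda,1\}/(1+\lambda)$ for the conditional marginals hold immediately.
\end{itemize}

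With marginal bound $c$ and spectral independence $\eta$, the result of Chen, Liu and Vigoda (or Anari--Jain--Koehler--Pham--Vuong) gives approximate tensorization of variance with constant $C(\eta,c)$. This yields $t_{\mathrm{rel}}=O_{\eta,c}(m)$.

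For uniform $b$-matchings we need the sharper $O(bm)$, with linear dependence on $\eta=O(b)$. For this we use the field-dynamics or ``$\eta$-spectral independence gives $\lceil\eta\rceil$-uniform block factorization'' route. Since $c$ is an absolute constant at $\lambda=1$, the relaxation-time constant is $O(\eta)$, giving $O(bm)$.

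\textbf{Mixing time.} We use $t_{\mathrm{mix}}(\eps)\le t_{\mathrm{rel}}\log(1/(\eps\,\mu_{\min}))$. For monomer--dimer and $b$-matchings, $\log(1/\mu_{\min})$ is bounded by $O(m)$. Since $|S|\le\sum_v b_v/2\le bn/2$, the number of configurations is at most $\binom{m}{\le bn/2}$, giving the alternative bound $O_\lambda(bn\log n)$. For monomer--dimer, $|M|\le n/2$ gives $n\log n$. This produces the stated $\min\{m,bn\log n\}$ terms.

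\textbf{Main obstacle.} The delicate step is getting relaxation time linear in $m$ with constant dependence only on $(\eta,c)$, uniformly over all graphs. This requires the modern local-to-global result for variance with bounded marginals, rather than the naive product bound. Where that invocation carries exponential dependence on $\eta$, I would first try the ``$k$-partite spectral independence'' or entropic-independence-free variance argument to keep the dependence polynomial in $b$.
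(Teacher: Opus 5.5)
Your spectral-independence computation is correct and matches the paper: with $\gamma=\Theta=\lambda$ you get $T=\max\{1,\lambda^b\}$. Your final step $t_{\mathrm{mix}}\le t_{\mathrm{rel}}\log(1/(\varepsilon\mu_{\min}))$, with the configuration counts, is also essentially the paper's. The gap is in the relaxation-time step.

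You claim that under any pinning every free edge has marginal probability at least $c(b,\lambda)>0$. Your justification only covers the single-site conditional law, where all other edges are pinned. Marginal boundedness in the Chen--Liu--Vigoda sense concerns $\mu^\tau(X_e=1)$ under partial pinnings, and this is not bounded below independently of the degree. Already for the monomer--dimer model on the star $K_{1,\Delta}$, with no pinning, each edge is present with probability $\lambda/(1+\Delta\lambda)\to0$. So ``spectral independence plus marginal bound implies $O_{\eta,c}(m)$'' only gives a $\Delta$-dependent constant, which is exactly the bounded-degree regime of \cite{CLV21,CG24}. As you note yourself, spectral independence alone gives only $m^{-O(\eta)}$. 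Your $O(bm)$ argument for uniform $b$-matchings rests on the same false marginal bound. Block factorization obtained from spectral independence alone also carries a factor polynomial in $m$.

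What is missing is the degree-free route through \cref{thm:field-comparison}, which needs three ingredients:
(i) The mean-normalized bound $\Cov(X)\preceq\rho\,\diag(p_e)$ established inside the proof of \cref{thm:holant}, before converting to $p_e(1-p_e)$, with $\rho\le C_1+\sqrt{2C_2}\,b$.
(ii) The observation that each tilted conditional law $(1-\eta)*\mu^S$ is again a residual $b$-matching law, at fugacity $s=\lambda(1-\eta)\le\lambda$ with capacities at most $b$. Since $T_b(s,s)\le\max\{1,\lambda^b\}$, the bound in (i) then holds uniformly with a single constant $A_{b,\lambda}$.
(iii) The small-fugacity estimate \cref{thm:small-fugacity-covariance}, $\Cov\preceq(1-s)^{-1}\diag(p_e)$. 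Without it, $\int_0^1\frac{c(\eta)-1}{1-\eta}\,d\eta$ diverges for any constant $c>1$.

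Take $c(\eta)=A_{b,\lambda}$ when $\lambda(1-\eta)>1/2$ and $c(\eta)=[1-\lambda(1-\eta)]^{-1}$ otherwise. Then the integral is $O_{b,\lambda}(1)$, and $r_{\mathrm{ins}}\le\lambda$ gives $\gap=\Omega_{b,\lambda}(1/m)$. For $\lambda=1$, take $c(\eta)=\min\{A,\eta^{-1}\}$ with $A=O(b)$. The integral is then at most $1+\log A$, so $t_{\mathrm{rel}}\le2\mathrm{e}Am=O(bm)$, linear in $b$. No marginal lower bound is used anywhere.
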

\begin{remark}
The bounds for uniform $b$-matchings also apply to uniform
$b$-edge covers by taking the complement map
$S\mapsto E\setminus S$.
Formally, given integer lower bounds $0\le b_v\le d_v$ and
fugacity $\lambda>0$, the measure on $b$-edge covers
$M\subseteq E$ satisfying $D_v(M)\ge b_v$ for $v\in V$,
with probability proportional to $\lambda^{|M|}$, is
$C_{B,\lambda}$-spectrally independent, where
$B=\max\{1,\max_{v\in V}(d_v-b_v)\}$.
\end{remark}

The proof of \cref{cor:matchings} is deferred to Appendix~\ref{app:matching-mixing}.

\subsection{Proof overview}\label{sec:overview}

Our goal is to bound the covariance of the edge indicators by a diagonal matrix. We first seek a comparison that separates a diagonal contribution from one carried by the vertices. The edge-to-vertex estimate below does exactly this: its only additional term consists of weighted sums of edge coefficients at each vertex. These sums single out the occupied degrees $D_v$ as the observables through which to control the remaining contribution. Since degrees are sums of edge indicators, a Schur complement of their joint covariance connects the two variance bounds.

The needed degree variance bound comes from the recursive coupling of Chen and Gu~\cite[Section~3.1]{CG24}: although an alternating disagreement trail may contain many edges, its contributions to occupied degrees cancel at internal vertices. This cancellation gives degree coupling independence~\cite{CZ23}, and hence a variance bound for additive degree observables that is independent of the maximum degree. This use of vertex observables is also motivated by earlier works on monomer dynamics~\cite{AASV21,AJKPV22}.

Fix a feasible pinning and work in the residual instance, omitting deterministic edges and vertices with no live incident edges.
Write $X_e=\one_{\{e\in S\}}$, $p_e=\Ex X_e$, $D_v=\sum_{e\ni v}X_e$, and $q_v=\Pr(D_v<b_v)>0$, where $b_v$ is the residual support bound.
We now introduce the following estimates that lead to the bound for spectral independence.

\begin{lemma}[Vertex variance bound]\label{lem:overview-degree}
For arbitrary real functions $g_v$ of the occupied degree,
\[
 \Var{}{\sum_vg_v(D_v)}\le2\sum_v\Var{}{g_v(D_v)},
\]
where $D_v$ denotes the number of selected edges incident to vertex $v$.
In particular, for every real vector $h$,
\begin{equation}\tag{A}\label{eq:overview-degree}
 \Var{}{\sum_vh_vD_v}\le2b^2\sum_vq_vh_v^2.
\end{equation}
\end{lemma}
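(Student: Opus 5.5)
We would prove the first inequality by a recursive-coupling / martingale argument in the spirit of coupling independence \cite{CZ23}, using the Chen--Gu coupling \cite[Section~3.1]{CG24}. The second inequality then follows by specialization.

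\textbf{Reduction to a degree-influence bound.} Let $F=\sum_v g_v(D_v)$. Expose the edges $e_1,\dots,e_m$ one at a time and write the Doob martingale decomposition
\[
\Var{}{F}=\sum_{i}\Ex\bigl[\Var{}{\Ex[F\mid X_{e_1},\dots,X_{e_i}]\mid X_{e_1},\dots,X_{e_{i-1}}}\bigr].
\]
Conditioning on earlier edges is just a further pinning, which keeps us inside the class of log-concave Holant instances. So it suffices to bound, for each pinning $\tau$ and each edge $e=uv$, the one-step variance $p_e(1-p_e)\bigl(\Ex^\tau[F\mid X_e=1]-\Ex^\tau[F\mid X_e=0]\bigr)^2$.

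\textbf{Coupling and cancellation.} To control this difference we use the Chen--Gu recursive coupling of $\mu^{\tau,e\in S}$ and $\mu^{\tau,e\notin S}$. Its disagreement set is an alternating trail starting at $e$. Along such a trail, every internal vertex sees one edge added and one removed, so its degree is unchanged. Only the two endpoints of the trail change degree, each by $\pm1$. Hence
\[
\Ex^\tau[F\mid X_e=1]-\Ex^\tau[F\mid X_e=0]
\]
is a sum of at most two terms of the form $\Ex[g_w(D_w)-g_w(D'_w)]$, where $D_w$ and $D'_w$ differ by one at a trail endpoint $w$. This is the ``degree coupling independence'' mentioned in the overview, and it is the source of the factor $2$.

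\textbf{Main obstacle: reassembling the sum.} We must recombine these pieces into $\sum_v\Var{}{g_v(D_v)}$ rather than a degree-dependent sum. The plan is to avoid an arbitrary edge order and expose edges vertex by vertex, so that the martingale increments attributable to vertex $w$ telescope into a bound of the form $\Var{}{g_w(D_w)}$. Roughly, the contribution routed to $w$ is the conditional variance of $g_w$ across one-unit shifts of $D_w$. Log-concavity (via ultra-log-concavity of the degree law under the pinning) should allow us to compare
\[
\sum_k \Pr(D_w=k)\Pr(D_w=k+1)\,(g(k+1)-g(k))^2
\]
with $\Var{}{g_w(D_w)}$. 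If this bookkeeping fails, the fallback is to bound the influence matrix of the vector $(D_v)$ directly: show that $\Cov(D_v,D_w)$ has $\ell_\infty$ row sums at most $2\Var{}{D_v}$ by the endpoint-counting argument, and apply a Schur test. This proves the linear case only, which is all that (A) requires.

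\textbf{Deriving (A).} Take $g_v(D)=h_vD$, so that $\Var{}{h_vD_v}=h_v^2\Var{}{D_v}$. Then bound $\Var{}{D_v}\le b^2q_v$ as follows. Since $D_v\in[0,b_v]$ and the law of $D_v$ has no internal zeros, $D_v=b_v$ with probability $1-q_v$. Therefore
\[
\Var{}{D_v}\le\Ex(D_v-b_v)^2\le b_v^2\Pr(D_v<b_v)\le b^2q_v.
\]
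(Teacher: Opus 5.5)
\textbf{There is a genuine gap: the first inequality, for arbitrary $g_v$, is never established.}

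Your edge-exposure martingale reduces $\Var{}{F}$ to one-step terms $p_e(1-p_e)\bigl(\Ex^\tau[F\mid X_e=1]-\Ex^\tau[F\mid X_e=0]\bigr)^2$. The coupling only says this difference is an \emph{expectation} of $g_w$-increments at two random trail endpoints, under a pinning that changes at every exposure. To arrive at $\sum_w\Var{}{g_w(D_w)}$ you would have to keep the cancellation inside that expectation across all exposures. ``Exposing edges vertex by vertex so that increments telescope'' does not supply a mechanism for this. The comparison you single out is also not the crux: $\sum_k\Pr(D_w=k)\Pr(D_w=k+1)(g(k+1)-g(k))^2\le\Var{}{g(D_w)}$ follows directly from expanding $\Ex[(g(D_w)-g(D_w'))^2]$ for an independent copy $D_w'$, with no log-concavity. (Ultra-log-concavity of the degree marginal is not something you have anyway.)

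Your fallback does not rescue the statement either. It handles only linear $g_v$, while the paper later uses the general form with nonlinear $g_u$: $g_u=\lambda a_{vu}\one_{\{D_u=0\}}$ for matchings, and $g_u$ proportional to $r_u(D_u)$ in the centered-star bound.

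\textbf{The missing idea} is to treat the degree vector itself as the spin system, and to convert coupling contraction into an $L^2$ statement. The paper proceeds as follows.
\begin{enumerate}
\item It couples $D_v=k$ against $D_v=k+1$ (a vertex pinning, not an edge pinning) via the Chen--Gu recursion.
\item It extends this coupling to all degree pinnings, including infeasible ones, by attaching pendant edges of activity $\varepsilon$ and letting $\varepsilon\to0$. This gives degree coupling independence with constant $2$.
\item The block dynamics that keeps $D_R$ for a random $R$ and resamples the remaining degrees then contracts the degree metric by $2\max_v\Pr(v\in R)$.
\item M.~F.~Chen's lemma turns this contraction into a spectral gap of at least $1-2r$, valid for every function of $D$, not just Lipschitz ones. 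Nothing in your plan plays this role.
\item Taking $R$ to be a uniform singleton and using total variance gives $\sum_v\Var{}{\Ex[F\mid D_v]}\le2\Var{}{F}$.
\item Its dual, obtained by Cauchy--Schwarz on $\Var{}{G}=\sum_v\Cov(\Ex[G\mid D_v],g_v(D_v))$ for $G=\sum_vg_v(D_v)$, is exactly the claimed inequality.
\end{enumerate}

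\textbf{On the fallback for (A) alone.} Your Schur-test argument can be made to work, but only with this vertex-pinning coupling. Write $\Cov(D_v,D_w)=\Cov(D_v,m_w(D_v))$ with $m_w(k)=\Ex[D_w\mid D_v=k]$, and expand over the thresholds $\one_{\{D_v>j\}}$. Since $\sum_w|m_w(j+1)-m_w(j)|\le2$, this gives $\sum_w|\Cov(D_v,D_w)|\le2\Var{}{D_v}$.

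If you instead use the edge-pinning coupling you set up and split $D_v=\sum_{e\ni v}X_e$, you only get a row sum of $2\sum_{e\ni v}p_e(1-p_e)$. This can be far larger than $\Var{}{D_v}$. For the monomer--dimer model on $K_n$ at fixed activity, that sum is about $1$, while $\Var{}{D_v}\approx q_v\asymp n^{-1/2}$. This is a degree-dependent loss of the kind the lemma is designed to avoid.

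Your final step, deriving (A) from the first inequality via $\Var{}{D_v}\le\Ex[(D_v-b_v)^2]\le b^2q_v$, is correct and matches the paper.
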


This lemma bounds the variance of an additive degree observable by twice the sum of its individual variances, independently of the maximum degree. It follows from degree coupling independence and the variance characterization of spectral independence; see \cref{sec:degrees}.

\begin{lemma}[Edge-to-vertex variance bound]\label{lem:overview-transfer}
For every real edge vector $a$,
\begin{equation}\tag{B}\label{eq:overview-source}
 \Var{}{\sum_ea_eX_e}\le C_1\sum_ep_ea_e^2+C_2\sum_v\frac{\left(\sum_{e\ni v}p_ea_e\right)^2}{q_v},
\end{equation}
where $C_1=O(1+\Theta+(b-1)T^2)$ and $C_2=O(1)$, with absolute implied constants.
\end{lemma}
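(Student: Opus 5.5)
The plan is to derive a Stein-type ``insertion identity'' for the edge indicators, use it to express $\Var{}{F}$, with $F=\sum_ea_eX_e$, as a diagonal term plus a covariance of $F$ with a function of the occupied degrees, and then control that covariance by Cauchy--Schwarz together with the vertex variance bound \eqref{eq:overview-degree} of \cref{lem:overview-degree}. Finally we absorb $\sqrt{\Var{}{F}}$ into the left-hand side.

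\textbf{Step 1: insertion identity.} For $e=uv$ and any function $g$ of $S$, the product form \eqref{eq:holant} gives
\[
 \Ex\bigl[X_e\,g(S)\bigr]=\Ex\bigl[(1-X_e)\,\theta_{uv}\bigl(D_u,D_v\bigr)\,g(S+e)\bigr],
\]
where $\theta_{uv}(i,j)=\lambda_{uv}r_u(i)r_v(j)$. On $\{e\notin S\}$ the degrees $D_u,D_v$ are the degrees without $e$, and $r_v(b_v)=0$ automatically kills infeasible insertions. Apply this with $g=F-\Ex F$. Write $\bar F=F-\Ex F$ and $\theta_e=\theta_{uv}(D_u,D_v)$. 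Then
\[
 \Var{}{F}=\sum_ea_e\Ex[X_e\bar F]
 =\sum_ea_e^2\,\Ex[(1-X_e)\theta_e]+\sum_e a_e\Ex\bigl[(1-X_e)\theta_e\bar F\bigr].
\]
The first sum equals $\sum_ea_e^2p_e$ by the same identity with $g\equiv1$.

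\textbf{Step 2: make the weights vertex-additive.} The second sum is not yet of degree form, because $(1-X_e)\theta_e$ involves both $X_e$ and a product $r_u(D_u)r_v(D_v)$. For the factor $(1-X_e)$, rewrite $\Ex[X_e\theta_e\bar F]$ via the identity once more. This produces only terms bounded by $\Theta\sum_e p_ea_e^2$ plus a term of the same shape, which is where the $(1+\Theta)$ factors enter. For the product, use the bounds $\gamma\le\theta\le\Theta$ and log-concavity, which makes each $r_v$ nonincreasing in $k$. With these, decompose
\[
 \theta_e-\Ex\theta_e\approx\lambda_{uv}\bigl(\bar r_v\,(r_u(D_u)-\bar r_u)+\bar r_u\,(r_v(D_v)-\bar r_v)\bigr)+\text{(product of fluctuations)},
\]
where $\bar r_v$ is a suitable typical value of $r_v(D_v)$. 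The linear part of the resulting sum is then $\sum_v g_v(D_v)$ with
\[
 g_v(k)=\sum_{e=uv\ni v}a_e\lambda_{uv}\bar r_u\,r_v(k).
\]
Since $\lambda_{uv}\bar r_u\bar r_v\asymp p_e$ up to factors controlled by $T$, this gives $\Var{}{g_v(D_v)}\lesssim\bigl(\sum_{e\ni v}p_ea_e\bigr)^2\Var{}{r_v(D_v)}/\bar r_v^{\,2}$. For $b=1$ the function $r_v$ is $\lambda\mathbf 1[D_v=0]$ up to scaling, so its relative variance is at most $(1-q_v)/q_v\le1/q_v$ and there is no product term beyond the indicator structure. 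For $b\ge2$ the relative fluctuations of $r_v(D_v)$ are controlled using $\gamma\le\theta\le\Theta$. This costs factors of $T$, and the product-of-fluctuations term is bounded diagonally by $(b-1)T^2\sum_ep_ea_e^2$; this is the source of the $(b-1)T^2$ in $C_1$.

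\textbf{Step 3: absorb.} We obtain
\[
 \Var{}{F}\le C_1'\sum_ep_ea_e^2+\sqrt{\Var{}{F}}\;\sqrt{\Var{}{\sum_vg_v(D_v)}}.
\]
By the first part of \cref{lem:overview-degree},
\[
 \Var{}{\sum_vg_v(D_v)}\le2\sum_v\Var{}{g_v(D_v)}\lesssim\sum_v\frac{\bigl(\sum_{e\ni v}p_ea_e\bigr)^2}{q_v}.
\]
Using $xy\le\frac12x^2+\frac12y^2$ and absorbing gives \eqref{eq:overview-source} with $C_2=O(1)$.

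The main obstacle is Step 2 for $b\ge2$. There one must show that $\lambda_{uv}\bar r_u\bar r_v$ is comparable to $p_e$, and that the fluctuation of $r_v(D_v)$, not just of $\mathbf 1[D_v<b_v]$, is charged to $1/q_v$ with only $O(1)$ loss in $C_2$, pushing all $T$-dependence into the diagonal constant. I would first try to handle this by conditioning on the degrees of the other endpoint, which decouples the product, and by using monotonicity of $r_v$ from log-concavity to compare $\Var{}{r_v(D_v)}$ with $r_v(0)^2(1-q_v)$.
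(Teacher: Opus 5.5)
Your Step 1 is correct. If you carry the ``rewrite once more'' of Step 2 through to termination (it stops after at most $b$ rounds because $r_v(b_v)=0$), the alternating products $\theta_e(i,j)-\theta_e(i,j)\theta_e(i+1,j+1)+\cdots$ sum to exactly the harmonic surrogate $\phi_e$ of \cref{def:harmonic-surrogates}. You then recover the relation behind the paper's comparison identity \eqref{eq:comparison-identity}: $\Var{}{F}=\sum_ea_e^2(p_e-\Ex[X_eK_e])+\Cov(F,F_K)$, with $F_K=\sum_ea_eK_e$ and $K_e=\phi_e(D_u,D_v)$. After your absorption step, everything rests on a degree-free bound for $\Var{}{F_K}$, a sum of functions of \emph{two} endpoint degrees. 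This is where the proposal has a genuine gap, which you flag yourself as the main obstacle. \cref{lem:overview-degree} controls only sums $\sum_vg_v(D_v)$ of single-vertex functions. Your linearization leaves the remainder $\sum_ea_e\lambda_{uv}(r_u(D_u)-\bar r_u)(r_v(D_v)-\bar r_v)$, which is genuinely two-variable, and you assert without any mechanism that its variance is diagonal. This already happens for $b=1$, contrary to your remark. There $\theta_e=\lambda\one[D_u=0]\one[D_v=0]$ is itself a product, so $\lambda\sum_ea_e(\one[D_u=0]-\bar r_u)(\one[D_v=0]-\bar r_v)$ does not vanish. The paper needs a separate ingredient at this point, namely \cref{lem:quadratic-reveal}. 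It is derived from the block-dynamics contraction of \cref{thm:bessel} with a random revealing set $R$ (each vertex kept independently with probability $1/3$). It bounds $\Var{}{\sum_ea_eK_e}$ by the star second moments $6\sum_v\Ex[(\sum_{e\ni v}a_eK_e)^2]$ plus a diagonal term.

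Even granting that, your linear part does not have the required form. The lemma asks for the exact vertex term $(\sum_{e\ni v}p_ea_e)^2/q_v$ with $C_2=O(1)$, i.e.\ $(HPa)_v^2/q_v$, which is precisely what the Schur-complement step consumes. Your $g_v$ has coefficient $\sum_{e\ni v}a_e\lambda_{uv}\bar r_u$. Edge-wise comparability $\lambda_{uv}\bar r_u\bar r_v\asymp p_e$ says nothing about signed sums: $\sum_{e\ni v}p_ea_e$ can vanish while your coefficient does not. With only comparability available, absorbing the discrepancy by Cauchy--Schwarz over the star costs a factor $\Ex[D_v]\Var{}{r_v(D_v)}/\bar r_v^2$. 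For $b=1$ this factor equals $(1-q_v)^2/q_v$, which is of order $\lambda d_v$ at the center of a star. For $b\ge2$, $r_v(D_v)$ also fluctuates among the degrees $0,\dots,b_v-1$, where $1-q_v$ gives no control. So your proposed comparison with $r_v(0)^2(1-q_v)$ fails, and charging this fluctuation through the relative variance pushes $T$-dependence into the vertex coefficient. The paper handles both problems star by star in \cref{lem:source}. Since $\Ex K_e=p_e$ exactly and $K_e=0$ off $A_v=\{D_v<b_v\}$, the law of total variance on $\one_{A_v}$ yields exactly $\frac{1-q_v}{q_v}(\sum_{e\ni v}p_ea_e)^2$. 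Inside $A_v$, $K_e$ is replaced by $\widehat K_e=\one_{A_v}\theta_e(b_v-1,D_u)$, a function of $D_u$ alone, so \cref{lem:overview-degree} applies conditionally. The error term is supported on $\{D_v\le b_v-2\}$. It becomes diagonal through the two-vacancy counting bound $d\Pr(D_v\le b_v-2)\le 8(b_v-1)\min\{1,\gamma\}^{-2}\Pr(X_e=1)$, taken conditional on the exterior of the star. That degree-cancelling estimate is what your plan is missing.
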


This lemma reduces edge variance to a diagonal contribution and weighted sums of edge coefficients at vertices. Its proof bounds the local star variances of endpoint-degree surrogates and combines these estimates by revealing degrees at random; see its proof in \cref{sec:stars}.

Before combining the two variance bounds, we recall the standard Schur-complement criterion. For real symmetric matrices, write $A\succeq0$ when $A$ is positive semidefinite, and $A\preceq B$ when $B-A\succeq0$.
This criterion is well known; see~\cite[Sections~2.3 and~4.1]{CL26} for a proof.

\begin{lemma}[Schur-complement criterion]\label{lem:schur-complement}
Let $A,C$ be real symmetric matrices and $B$ a real matrix of compatible dimensions. If $C$ is positive definite, then
\[
 \begin{pmatrix}A&B\\B^\top&C\end{pmatrix}\succeq0
 \quad\Longleftrightarrow\quad
 A-BC^{-1}B^\top\succeq0.
\]
The matrix $A-BC^{-1}B^\top$ is called the \emph{Schur complement} of $C$ in this block matrix.
\end{lemma}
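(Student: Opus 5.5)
We prove the Schur-complement criterion by an explicit congruence: the block matrix is congruent to a block-diagonal matrix in which the two diagonal blocks are the Schur complement and $C$. Since $C$ is positive definite, $C^{-1}$ exists and is symmetric, so we may set
\[
 P=\begin{pmatrix}I&-BC^{-1}\\0&I\end{pmatrix},
\]
which is invertible with inverse $\begin{pmatrix}I&BC^{-1}\\0&I\end{pmatrix}$.

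A direct multiplication, using $(C^{-1})^\top=C^{-1}$, gives
\[
 P\begin{pmatrix}A&B\\B^\top&C\end{pmatrix}P^\top=\begin{pmatrix}A-BC^{-1}B^\top&0\\0&C\end{pmatrix}.
\]
To check this, first compute
\[
 P\begin{pmatrix}A&B\\B^\top&C\end{pmatrix}=\begin{pmatrix}A-BC^{-1}B^\top&0\\B^\top&C\end{pmatrix}.
\]
Right-multiplying by $P^\top=\begin{pmatrix}I&0\\-C^{-1}B^\top&I\end{pmatrix}$ then kills the lower-left block $B^\top$, since $B^\top-CC^{-1}B^\top=0$.

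Positive semidefiniteness is preserved under congruence by an invertible matrix. Indeed, for any vector $x$ we have $x^\top PMP^\top x=(P^\top x)^\top M(P^\top x)$, and $P^\top$ is a bijection on vectors. Hence the block matrix is PSD if and only if the block-diagonal matrix is PSD.

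A block-diagonal matrix is PSD if and only if each diagonal block is PSD. To see this, test the quadratic form on vectors $(y,z)$: it equals
\[
 y^\top(A-BC^{-1}B^\top)y+z^\top Cz.
\]
Taking $z=0$ or $y=0$ shows that each block must be PSD, and conversely PSD blocks make the sum nonnegative. Since $C\succ0$ holds by assumption, the block-diagonal matrix is PSD exactly when $A-BC^{-1}B^\top\succeq0$, which proves the equivalence.

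The only delicate point is the algebraic verification of the congruence, in particular using the symmetry of $C^{-1}$ so that the off-diagonal blocks cancel exactly.
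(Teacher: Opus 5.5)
Your proof is correct and complete: congruence by the invertible matrix $P$ reduces the block matrix to the block-diagonal matrix with blocks $A-BC^{-1}B^\top$ and $C$. Since invertible congruence preserves positive semidefiniteness and $C\succ0$ is given, both directions follow at once. The paper gives no proof of its own; it cites the lemma as well known, and your block-$LDL^\top$ congruence is the standard argument. The paper only uses the forward direction, with $A=\Sigma$, $B=\Sigma H^\top$, $C=2b^2Q$.
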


We apply this criterion to the joint covariance of edge indicators and occupied degrees.

\begin{proof}[Proof of \cref{thm:holant} via the Schur complement]
Fix a feasible pinning $\tau$ and work in the residual instance. Pinning shifts and truncates the signatures, preserving log-concavity, positivity at zero, and the bounds $\gamma,\Theta,b$. Omit deterministic edges and vertices with no remaining incident edges; if no edges remain, the conclusion is immediate. Every retained $p_e$ is positive, and $q_v>0$ because the empty residual configuration has positive weight. Write $X=(X_e)_{e\in E}$ and $D=(D_v)_{v\in V}$ as column vectors, indexed by the remaining edges and vertices.

To control the edge covariance using the degree variance bound, we begin with the joint covariance of $X$ and $D$. Write $\Sigma=\Cov(X)$, $Q=\diag(q_v)$, and let $H=(\one_{\{e\ni v\}})_{v\in V,e\in E}$ be the unsigned vertex-edge incidence matrix. Since $D=HX$, the joint covariance is
\[
 \Cov\!\begin{pmatrix}X\\D\end{pmatrix}
 =\begin{pmatrix}
 \Sigma & \Sigma H^\top\\
 H\Sigma & H\Sigma H^\top
 \end{pmatrix}
 \succeq0.
\]
By \eqref{eq:overview-degree}, the degree block satisfies
\begin{equation}\tag{$\mathrm{A}'$}\label{eq:schur-degree}
 H\Sigma H^\top\preceq2b^2Q.
\end{equation}
We may therefore increase that block to $2b^2Q$ while preserving positive semidefiniteness:
\[
 \begin{pmatrix}
 \Sigma & \Sigma H^\top\\
 H\Sigma & 2b^2Q
 \end{pmatrix}
 \succeq0.
\]
Since $2b^2Q$ is positive definite, \cref{lem:schur-complement} with $A=\Sigma$, $B=\Sigma H^\top$, and $C=2b^2Q$ gives
\begin{equation}\label{eq:schur-mixed}
 \Sigma H^\top Q^{-1}H\Sigma\preceq2b^2\Sigma.
\end{equation}
This bounds the mixed covariances between edges and degrees in terms of the edge covariance itself.

We now use this estimate to control the vertex contribution in \eqref{eq:overview-source}. Write $P=\diag(p_e)$ and take nonnegative constants $C_1,C_2$ from that estimate. Since $(HPa)_v=\sum_{e\ni v}p_ea_e$ for every $a\in\mathbb R^E$, its matrix form is
\begin{equation}\tag{$\mathrm{B}'$}\label{eq:schur-edge}
 \Sigma\preceq C_1P+C_2PH^\top Q^{-1}HP.
\end{equation}
Our goal is to bound $\Sigma$ by a multiple of $P$. Since $P$ is positive definite, we may multiply \eqref{eq:schur-edge} on the left by $\Sigma P^{-1}$ and on the right by $P^{-1}\Sigma$. These multipliers are transposes, so the positive semidefinite order is preserved. Using \eqref{eq:schur-mixed}, we obtain
\begin{align*}
 \Sigma P^{-1}\Sigma P^{-1}\Sigma
 &\preceq C_1\Sigma P^{-1}\Sigma
     +C_2\Sigma H^\top Q^{-1}H\Sigma
 &\preceq C_1\Sigma P^{-1}\Sigma+2C_2b^2\Sigma.
\end{align*}
Let $t$ be any generalized eigenvalue of $(\Sigma,P)$, meaning that $\Sigma a=tPa$ for some nonzero $a\in\mathbb R^E$. Since $\Sigma$ is positive semidefinite and $P$ is positive definite, $t\ge0$. Evaluating the preceding inequality on $a$ and dividing by $a^\top Pa>0$ gives
\[
 t^3\le C_1t^2+2C_2b^2t.
\]
Consequently, every generalized eigenvalue satisfies
\[
 t\le\rho:=\frac{C_1+\sqrt{C_1^2+8C_2b^2}}{2}
 \le C_1+\sqrt{2C_2}\,b
 =O\bigl(1+\Theta+(b-1)T^2\bigr),
\]
where the last step uses $T\ge1$. By the generalized spectral theorem, $\Sigma\preceq\rho P$.

Finally, the insertion-odds bound gives $p_e\le\Theta(1-p_e)$, so
\[
 \Sigma\preceq\rho P
 \preceq(1+\Theta)\rho\,\diag\bigl(p_e(1-p_e)\bigr).
\]
By the variance characterization of spectral independence, this proves the required bound for every feasible pinning.
\end{proof}

An alternative proof of~\Cref{thm:holant} using variance maximization is given in Appendix~\ref{app:alternative-proof}.
\subsection{Related work}\label{sec:related-work}

For the monomer--dimer model, the foundational work of Jerrum and Sinclair~\cite{JS89,Sin92} used the canonical path method and established a mixing-time bound of $O_\lambda(n^2m\log n)$, where $n=|V|$, $m=|E|$, and $\lambda>0$ is the activity; see also~\cite{Jer03}. Subsequently, \cite{BGKNT07} established correlation decay through a two-level analysis of the computation-tree recursion, obtaining a contraction factor $1-\Theta_\lambda(1/\sqrt{\Delta})$ for fixed $\lambda$. Building on this contraction, \cite{CLV21} proved an $O_{\Delta,\lambda}(1)$ spectral independence bound and, through entropy factorization, obtained the optimal $O_{\Delta,\lambda}(m\log n)$ mixing time for single-edge Glauber dynamics on bounded-degree graphs. However, the constant in this mixing-time bound grows exponentially with the maximum degree. To obtain polynomial dependence on the degree, \cite{CFJMYZ25} developed a transport flow framework combining canonical paths with local-to-global arguments for functional inequalities. This yields a $\widetilde{O}_\lambda(\Delta^2m)$ mixing time for the lazy Jerrum--Sinclair chain, which permits exchanges of adjacent edges as well as single-edge insertions and deletions, and, by comparison, a $\widetilde{O}_\lambda(\Delta^3m)$ mixing time for single-edge Glauber dynamics. Concurrently, \cite{CCCYZ25} used a trickle-down theorem for field dynamics to obtain further bounds on arbitrary simple graphs: $O_\lambda(mn\log n)$ mixing for every fixed $0<\lambda<1$, and $O(\sqrt{\Delta}\,mn\log n)$ mixing for uniform matchings ($\lambda=1$).

Moving beyond ordinary matchings, sampling $b$-matchings brings more general local degree constraints into consideration. McQuillan~\cite{McQ13} introduced windable functions as a way to construct multicommodity flows for Holant problems. Building on this approach, Huang, Lu, and Zhang~\cite{HLZ16} proved polynomial mixing of a chain on an enlarged space of consistent and nearly consistent half-edge assignments, yielding an FPRAS for $b$-matchings with $b\le7$ on general graphs. Chen and Gu~\cite{CG24} subsequently obtained $O_{\Delta,\lambda}(n\log n)$ mixing of single-edge Glauber dynamics for arbitrary vertex capacities on all bounded-degree graphs.

The main purpose of this paper is to establish spectral independence without requiring bounded total influence. Two approaches to this problem have been explored in the literature. The first is the trickle-down method, which propagates spectral estimates from more conditioned distributions to less conditioned ones. Originating in the study of local spectral expansion~\cite{Opp18,ALOV19}, this method has been successfully applied in a variety of sampling settings~\cite{ALG21,WZZ24,AKV24,CCCYZ25,KOT26}. The second is the matrix inverse approach, which uses inverses of local matrices as building blocks for an exact or approximate inverse of the full correlation matrix, thereby reducing the spectral analysis to local computations~\cite{CYYZ24}.

\subsection{Concurrent work}\label{sec:concurrent-work}

We have learned that, in concurrent and independent work, Chihao Zhang and Zihan Zhang also established a constant spectral independence bound for the monomer--dimer model on graphs of unbounded degree, where the bound depends only on the activity.

\subsection{AI disclosure}
We used AI tools, including GPT-5.6 Sol, to develop the main proof ideas and to edit the manuscript. The authors are responsible for the mathematical content and the final presentation.

\section{Preliminaries}
\subsection{Basic Markov Chains}
Let $\Omega$ be a finite state space.
A time-homogeneous Markov chain $(X_t)_{t\in \mathbb{Z}_{\geq 0}}$ is specified by a \emph{transition matrix} $P\in \mathbb{R}_{\geq 0}^{\Omega \times \Omega}$, where $P(x,y)$ is the probability of moving from $x$ to $y$ in one step, and an initial state $X_0$.
Writing $\mu_t$ for the distribution of $X_t$, we have that $\mu_{t+1}=\mu_tP$.
Note that by definition, $P$ is a stochastic matrix: $\sum_{y\in\Omega}P(x,y)=1$ for any $x\in\Omega$.

The Markov chain is \emph{irreducible} if for any $x,y\in \Omega$, there exists $t > 0$ such that $P^t(x,y) > 0$ and the Markov chain is \emph{aperiodic} if for any $x\in \Omega$, $\gcd\{t: P^t(x,x) > 0\} = 1$.
The chain is called \emph{ergodic} if it is both irreducible
and aperiodic.
By the fundamental theorem of finite Markov chains, an ergodic Markov chain has a unique \emph{stationary distribution} $\mu$, which satisfies $\mu P = \mu$.
We say $P$ is reversible with respect to $\mu$ if  it satisfies the detailed balance equations: for any $x,y\in \Omega$, $\mu(x)P(x,y) = \mu(y)P(y,x)$.
Detailed balance implies stationarity.

A canonical example of a Markov chain is \emph{Glauber dynamics}.
For a Holant measure $\mu=\mu_{G,\boldsymbol{f},\boldsymbol{\lambda}}$ on a simple graph $G = (V,E)$, Glauber dynamics updates the occupancy of a single edge at each step.
Formally, for $t\geq 0$, we obtain $X_{t+1}$ from $X_t$ as follows:
\begin{itemize}
  \item Pick $e\in E$ uniformly at random and let $X' = X_{t} \setminus \{e\}$.
  \item Set $X_{t+1} = X' \cup \{e\}$ with probability $\frac{\mu(X'\cup\{e\})}{\mu(X')+\mu(X'\cup\{e\})}$ and set $X_{t+1} = X'$ with the remaining probability.
\end{itemize}
It is easy to verify that this chain is reversible with respect to $\mu$, and hence $\mu$ is stationary.

The efficiency of a Markov chain sampling algorithm is governed by
its rate of convergence to stationarity. For a finite ergodic chain with transition matrix $P$ and stationary
distribution $\mu$, its \emph{mixing time} is defined, for
$0<\varepsilon<1/2$, by
$$
  t_{\-{mix}}(\eps) := \max_{x\in \Omega} \min \{t\geq 0\mid d_{\-{TV}}(P^t(x,\cdot),\mu) \leq \eps\},
$$
where $d_{\-{TV}}(\mu,\nu) = \frac 12 \sum_{x\in \Omega}\abs{\mu(x) - \nu(x)}$ denotes the total variation distance.

For reversible chains, one way to bound the mixing time is to obtain a lower bound on the \emph{absolute spectral gap} of the transition matrix.
Suppose $P$ is reversible and its eigenvalues are $1=\lambda_1 \geq \lambda_2 \geq \cdots \geq \lambda_{|\Omega|}$.
The spectral gap $\gap(P)$ and \emph{absolute spectral gap} $\gamma_*(P)$ are defined respectively by
$$
  \gap(P) := 1 - \lambda_2,\qquad \gamma_*(P) := 1 - \max_{2\leq i\leq |\Omega|}\abs{\lambda_i}.
$$
Note that irreducibility implies $\lambda_2<1$, and irreducibility together with aperiodicity implies $\gamma_*(P)>0$.
The following proposition gives an upper bound on the mixing time in terms of $\gamma_*(P)$.
\begin{proposition}\label{prop:spectral-gap_mixing-time}
  Suppose $(X_t)_{t\geq 0}$ is an ergodic Markov chain on a finite state space $\Omega$ with transition matrix $P$.
  $P$ is reversible with respect to its stationary distribution $\mu$.
  Then, for any $\varepsilon\in(0,\frac 12)$, we have that
  $$
    t_{\-{mix}}(\eps) \leq \frac {1}{\gamma_*(P)}\log \frac{1}{\eps\cdot \mu_{\min}},
  $$
  where $\mu_{\min} = \min \{\mu(x)\mid \mu(x) > 0, x\in \Omega\}$.
\end{proposition}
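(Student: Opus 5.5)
This is the standard mixing-time bound from the absolute spectral gap, and I would prove it by an $L^2(\mu)$ spectral decomposition. The first step is to pass from total variation to the $\chi^2$-distance. Fix $x\in\Omega$ and let $h_t(y)=P^t(x,y)/\mu(y)$ be the density of $P^t(x,\cdot)$ with respect to $\mu$. By Cauchy--Schwarz,
\[
 2\,d_{\mathrm{TV}}(P^t(x,\cdot),\mu)=\sum_y\mu(y)\abs{h_t(y)-1}\le\Bigl(\sum_y\mu(y)(h_t(y)-1)^2\Bigr)^{1/2}=\norm{L^2(\mu)}{h_t-1}.
\]

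The second step uses reversibility. Detailed balance makes $P$ self-adjoint on $L^2(\mu)$, so there is an orthonormal eigenbasis $f_1=\one,f_2,\ldots,f_{|\Omega|}$ with eigenvalues $\lambda_i$. Reversibility also gives $h_t(y)=P^t(x,y)/\mu(y)=P^t(y,x)/\mu(x)$, that is, $h_t=P^t\delta_x/\mu(x)$, where $\delta_x$ is the indicator of $x$. The function $g:=\delta_x/\mu(x)-1$ has $\mu$-mean zero, so it is orthogonal to $f_1$. Its norm satisfies
\[
 \norm{L^2(\mu)}{g}^2=\frac{1}{\mu(x)}-1\le\frac1{\mu(x)}.
\]
Since $P^t\one=\one$, we get $h_t-1=P^tg$. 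Expanding $g$ in the eigenbasis (with no $f_1$ component) and using $\abs{\lambda_i}\le1-\gamma_*(P)$ for $i\ge2$ gives
\[
 \norm{L^2(\mu)}{h_t-1}\le(1-\gamma_*)^t\,\mu(x)^{-1/2}\le e^{-\gamma_* t}\mu_{\min}^{-1/2}.
\]
If some $\mu(x)=0$, one can restrict to the support; ergodicity with stationary distribution $\mu$ makes this harmless, and the proposition's definition of $\mu_{\min}$ already reflects this.

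The third step combines the two bounds: $d_{\mathrm{TV}}\le\tfrac12e^{-\gamma_*t}\mu_{\min}^{-1/2}$. This is at most $\eps$ once $t\ge\gamma_*^{-1}\log\bigl(1/(2\eps\sqrt{\mu_{\min}})\bigr)$. Because $\eps<1/2$ and $\mu_{\min}\le1$, this threshold is at most $\gamma_*^{-1}\log\bigl(1/(\eps\mu_{\min})\bigr)$, which is the claimed bound.

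None of these steps is a real obstacle. The only point needing care is the identity $h_t=P^t(\delta_x/\mu(x))$, which is exactly where reversibility enters, together with keeping track of the components orthogonal to constants.
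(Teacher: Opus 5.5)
The paper does not prove this proposition; it quotes it as a standard preliminary (the textbook bound in \cite{LP17}). Your argument therefore has to stand on its own, and it does.

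The textbook route writes $P^t(x,y)/\mu(y)-1=\sum_{j\ge2}f_j(x)f_j(y)\lambda_j^t$ and applies Cauchy--Schwarz over the eigen-index, using $\sum_jf_j(x)^2=1/\mu(x)$. With $\lambda_*=1-\gamma_*$, this gives the pointwise bound $\abs{P^t(x,y)/\mu(y)-1}\le\lambda_*^t/\sqrt{\mu(x)\mu(y)}\le e^{-\gamma_*t}/\mu_{\min}$, which controls the $L^\infty$ (and separation) distance and hence total variation. You instead bound the $\chi^2$-distance via $h_t-1=P^tg$ and apply Cauchy--Schwarz over states. Your version is sharper in $\mu_{\min}$, giving $d_{\mathrm{TV}}\le\tfrac12e^{-\gamma_*t}\mu_{\min}^{-1/2}$. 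The pointwise route instead gives uniform control of the density ratio, at the price of a full factor $\mu_{\min}^{-1}$.

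Two small points need tightening.

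First, $t_{\mathrm{mix}}(\eps)$ is an integer, so your argument gives $t_{\mathrm{mix}}(\eps)\le\lceil T_0\rceil$ with $T_0=\gamma_*^{-1}\log\bigl(1/(2\eps\sqrt{\mu_{\min}})\bigr)$. Comparing thresholds is not quite enough; you need $\lceil T_0\rceil\le T_1:=\gamma_*^{-1}\log\bigl(1/(\eps\mu_{\min})\bigr)$. This does hold. When $\abs{\Omega}\ge2$ we have $\gamma_*\le1$ and $\mu_{\min}\le\tfrac12$, so $T_1-T_0=\gamma_*^{-1}\log\bigl(2/\sqrt{\mu_{\min}}\bigr)\ge\log(2\sqrt2)>1$. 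The case $\abs{\Omega}=1$ is trivial. Your extra slack is exactly what pays for the rounding, and you should say so.

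Second, irreducibility already forces $\mu(x)>0$ for every $x$, so the remark about restricting to the support is unnecessary. It would not cover starting states outside the support anyway, since $t_{\mathrm{mix}}$ maximizes over all $x\in\Omega$.
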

The \emph{relaxation time} is given by the multiplicative inverse of the absolute spectral gap: $t_{\-{rel}} = \frac 1{\gamma_*(P)}$.
For Glauber dynamics, since $P$ is positive semi-definite, we have that $\gap(P_{\-{GD}}) = \gamma_*(P_{\-{GD}})$.

\subsection{Field dynamics}\label{sec:field-dynamics}
For an edge-set law $\mu$ and a feasible occupied set $S$, write $\mu^S$ for the law of the remaining edges conditioned on $S$ being occupied. Define its uniform tilt by
\[
 (\lambda*\mu^S)(J)\propto \lambda^{|J|}\mu^S(J),\qquad \lambda>0.
\]
The following field dynamics comparison turns covariance bounds for these tilted conditional laws into a spectral-gap bound for Glauber dynamics.

\begin{theorem}[Field dynamics comparison~{\cite[Theorem~1.16 and Lemma~1.18]{CCCYZ25}}]\label{thm:field-comparison}
Let $\mu$ be a probability law on a downward closed family $\mathcal{X} \subseteq 2^E$. Let $P_{\mathrm{GD}}$ be the Glauber dynamics on $\mu$, and define
\[
 r_{\mathrm{ins}}=
 \max_{\substack{J\subseteq E,\ e\notin J\\\mu(J\cup\{e\})>0}}
 \frac{\mu(J\cup\{e\})}{\mu(J)}.
\]
Suppose a nonnegative measurable function $c:(0,1)\to\mathbb R$ satisfies, for every feasible occupied set $S$ and every $0<\eta<1$,
\begin{equation}\label{eq:mixing-rate}
 \Cov_{(1-\eta)*\mu^S}(X)
 \preceq c(\eta)\diag\bigl(\Ex_{(1-\eta)*\mu^S}X_e\bigr).
\end{equation}
If the integral below is finite, then
\begin{equation}\label{eq:mixing-comparison}
 \gap(P_{\mathrm{GD}})
 \ge\frac{1}{(1+r_{\mathrm{ins}})m}
 \exp\left(-\int_0^1\frac{c(\eta)-1}{1-\eta}\,d\eta\right).
\end{equation}
\end{theorem}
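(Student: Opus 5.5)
The plan is to route the comparison through the \emph{field dynamics} of Chen, Feng, Yin and Zhang, used as an intermediate block dynamics. For $\theta\in(0,1)$, one step of field dynamics from a configuration $X$ does two things. First it draws a random subset $S\subseteq X$, keeping each occupied edge independently with a suitable probability depending on $\theta$. Then it resamples the whole configuration from the tilted conditional law $\theta*\mu^S$. The proof then has two independent ingredients:
\[
 \gap(P_{\mathrm{GD}})\;\ge\;\gap(P_{\mathrm{FD},\theta})\cdot\min_S\gap\bigl(P_{\mathrm{GD}}\text{ on }\theta*\mu^S\bigr).
\]
This decomposition is the standard block-dynamics comparison. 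It holds because each field-dynamics resampling can be simulated by running Glauber dynamics inside the block law $\theta*\mu^S$, and the Dirichlet forms compose. Downward closedness of $\mathcal X$ is what guarantees that every such block law lives on a feasible subfamily.

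\emph{Step 1 (gap of field dynamics).} I would realize field dynamics as a down-up walk and write its spectral gap as a variance contraction along a localization path. The path is parametrized by $\eta\in[0,1-\theta]$. Along it the current pinned set $S_\eta$ grows, and the current law is $(1-\eta)*\mu^{S_\eta}$. The key computation is a differential inequality: the derivative in $\eta$ of the expected conditional variance $\Ex\,\Var{(1-\eta)*\mu^{S_\eta}}{f}$ is bounded below by $-\frac{c(\eta)-1}{1-\eta}$ times that variance. Two facts enter here. An infinitesimal tilt changes the variance of $f$ through $\Cov(f,\sum_eX_e)$. The hypothesis~\eqref{eq:mixing-rate} bounds the covariance matrix by $c(\eta)\diag(\Ex X_e)$, which is exactly the Poisson-type rate at which edges are pinned. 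Grönwall's inequality then gives
\[
 \gap(P_{\mathrm{FD},\theta})\ge\exp\Bigl(-\int_0^{1-\theta}\frac{c(\eta)-1}{1-\eta}\,d\eta\Bigr).
\]
I expect this step to be the main obstacle. The delicate part is getting the normalization right so that the ``$-1$'' appears: a product measure must contribute zero loss, and this requires tracking the exact Poisson clock rates of the localization.

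\emph{Step 2 (Glauber inside blocks).} Next I would show that for every $S$,
\[
 \gap\bigl(P_{\mathrm{GD}}\text{ on }\theta*\mu^S\bigr)\ge\frac{1-o_\theta(1)}{(1+r_{\mathrm{ins}})m}.
\]
As $\theta\to0$, the insertion odds of $\theta*\mu^S$ are at most $\theta r_{\mathrm{ins}}$, so the law concentrates on the empty configuration and is close to a product measure. I would establish the bound by a direct comparison with the product Bernoulli law. Here the factor $(1+r_{\mathrm{ins}})$ arises from bounding the Glauber update probability of an edge below by $1/(1+r_{\mathrm{ins}})$ relative to the corresponding product chain.

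Finally I would combine the two steps in the decomposition and let $\theta\to0$. Finiteness of the integral lets the bound from Step 1 converge to the integral over $[0,1]$, which yields~\eqref{eq:mixing-comparison}.
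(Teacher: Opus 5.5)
The paper gives no proof of this statement; it is imported from \cite{CCCYZ25} (Theorem~1.16 and Lemma~1.18). Your overall architecture is the right one: field dynamics as an intermediate down--up walk, a localization/Gr\"onwall bound on its gap, a comparison with Glauber, and $\theta\to0$. However, Step~1 is false as stated, and your decomposition is too lossy to work once Step~1 is corrected.

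Along your localization $\nu_\eta=(1-\eta)*\mu^{S_\eta}$, an occupied unpinned edge is pinned at hazard rate $1/(1-\eta)$. Since $\nu_\eta(f)$ and $\nu_\eta(f^2)$ are martingales, the quadratic-variation computation gives
\[
 \frac{d}{d\eta}\Ex\Var{\nu_\eta}{f}
 =-\frac{1}{1-\eta}\Ex\Bigl[\sum_e\frac{\Cov_{\nu_\eta}(f,X_e)^2}{\Ex_{\nu_\eta}X_e}\Bigr]
 \ge-\frac{c(\eta)}{1-\eta}\Ex\Var{\nu_\eta}{f}.
\]
So the loss rate is $c(\eta)/(1-\eta)$, not $(c(\eta)-1)/(1-\eta)$, and Gr\"onwall gives $\gap(P_{\mathrm{FD},\theta})\ge\theta\exp\bigl(-\int_0^{1-\theta}\frac{c(\eta)-1}{1-\eta}\,d\eta\bigr)$. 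The ``$-1$'' is just $\int_0^{1-\theta}\frac{d\eta}{1-\eta}=\log\frac1\theta$ pulled out as the prefactor $\theta$.

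A product measure does lose variance. Take a single edge with $\mu(\{e\})=p$. Then $c\equiv1$ is admissible, so your Step~1 would give $\gap(P_{\mathrm{FD},\theta})\ge1$. Directly, however, $P(\varnothing\to\{e\})=\frac{\theta p}{1-p+\theta p}$ and $P(\{e\}\to\varnothing)=\frac{\theta(1-p)}{1-p+\theta p}$, so $\gap(P_{\mathrm{FD},\theta})=\frac{\theta}{1-p+\theta p}\to0$.

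With the correct Step~1, your decomposition only yields $\gap(P_{\mathrm{GD}})\ge\theta\,e^{-I_\theta}(1-o_\theta(1))/m$, which vanishes as $\theta\to0$. Here $I_\theta$ denotes $\int_0^{1-\theta}\frac{c(\eta)-1}{1-\eta}\,d\eta$. The missing idea is that the block Glauber Dirichlet forms, averaged over the down step, are smaller than the global one by a factor of order $\theta$; this is what recovers the lost $1/\theta$.

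To see this, work under the joint law of $(X,S)$ and fix $e$ and an outside configuration $\sigma$ with insertion odds $r=\mu(\sigma\cup\{e\})/\mu(\sigma)$. The block law sees odds $\theta r$. Summing the weights $(1-\theta)^{|S|}\theta^{|\sigma\setminus S|}$ over $S\subseteq\sigma$ shows that the contribution of $(e,\sigma)$ to $\Ex_S\,\mathcal{E}_{\mathrm{GD},\theta*\mu^S}(f,f)$ is $\frac{\theta(1+r)}{1+\theta r}\le\theta(1+r_{\mathrm{ins}})$ times its contribution to $\mathcal{E}_{\mathrm{GD},\mu}(f,f)$. Hence
\[
 \gap(P_{\mathrm{GD}})\ge\frac{\gap(P_{\mathrm{FD},\theta})\cdot\min_S\gap\bigl(P_{\mathrm{GD}}\text{ on }\theta*\mu^S\bigr)}{\theta(1+r_{\mathrm{ins}})}.
\]
Your decomposition is the weaker inequality obtained by bounding $\frac{\theta(1+r)}{1+\theta r}$ by $1$.

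Inside the blocks the insertion odds are at most $\theta r_{\mathrm{ins}}\to0$, so the block Glauber gap is $(1-o_\theta(1))/m$ with no loss in $r_{\mathrm{ins}}$. The factor $1+r_{\mathrm{ins}}$ in the theorem comes from the Dirichlet-form comparison above, not from your Step~2. The $\theta$'s then cancel, and letting $\theta\to0$ gives the claim.

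In your write-up, Step~1 overstates the gap by $1/\theta$, the decomposition drops the compensating $1/(\theta(1+r_{\mathrm{ins}}))$, and Step~2 reinserts $1+r_{\mathrm{ins}}$. These errors cancel in the final formula, but neither Step~1 nor the combination of steps is valid as written.
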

For matching models with small fugacity, the following covariance bound is provided.
\begin{theorem}[Covariance bound at small fugacity~{\cite[Theorem~1.14]{CCCYZ25}}]\label{thm:small-fugacity-covariance}
Let $\mu_s$ be a $b$-matching law with vertex-dependent capacities at fugacity $0<s<1$. For every feasible occupied set $S$,
\begin{equation}\label{eq:small-activity-cov}
 \Cov_{\mu_s^S}(X)
 \preceq\frac{1}{1-s}\diag\bigl(\Ex_{\mu_s^S}X_e\bigr).
\end{equation}
\end{theorem}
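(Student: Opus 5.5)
The plan is to reduce to the unconditioned law on a smaller instance and then argue by induction on the number of edges. The bound will be carried in the variance form
\[
 \Var{\mu_s}{\textstyle\sum_e a_eX_e}\le\frac1{1-s}\sum_e \Ex_{\mu_s}[X_e]\,a_e^2 .
\]

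\textbf{Reduction.} First I check that conditioning on an occupied set $S$ keeps us inside the model. Condition on $S$, delete every edge that meets a vertex $v$ with $D_v(S)=b_v$, and replace each capacity by $b_v-D_v(S)$. The result is again a $b$-matching law at fugacity $s$ with vertex-dependent capacities, on a subgraph. The class is therefore closed under the conditioning in \eqref{eq:small-activity-cov}, and it suffices to prove the bound for $S=\varnothing$ on every instance.

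\textbf{Induction on edges.} Fix an edge $e=uv$ and apply the law of total variance with respect to $X_e$. The inner term $\Ex\,\Var{}{F\mid X_e}$ is handled by the induction hypothesis. The law given $e\notin S$ is the instance with $e$ deleted. The law given $e\in S$ is the instance with $e$ deleted and $b_u,b_v$ each lowered by one. Both are smaller $b$-matching laws at fugacity $s$. The outer term equals
\[
 p_e(1-p_e)\Bigl(a_e+\sum_{f\ne e}a_f\Phi(e,f)\Bigr)^2 .
\]
In the inner term, the induction hypothesis produces the conditional marginals $\Pr(f\mid X_e)$ averaged over $X_e$, which is exactly $p_f$. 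The whole difficulty is therefore to absorb the outer cross term into the slack $\frac{s}{1-s}\sum_f p_fa_f^2$.

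\textbf{Absorbing the cross term.} By monotonicity of $b$-matchings (a downward closed family), $\Phi(e,f)\le0$ for $f$ adjacent to $e$. Moreover $|\Phi(e,f)|$ is controlled by how likely $f$ is to be blocked by $e$. For $f\ni u$,
\[
 \Pr(f\mid e\notin S)-\Pr(f\mid e\in S)\le \Pr(f\in S,\ D_u=b_u \text{ after inserting } e).
\]
Insertion odds are at most $s$. So I expect a Cauchy--Schwarz bound weighted by $p_f$ to give
\[
 \Bigl(\sum_f a_f\Phi(e,f)\Bigr)^2\le \Bigl(\sum_f |\Phi(e,f)|\Bigr)\Bigl(\sum_f |\Phi(e,f)|\,a_f^2\Bigr),
\]
together with the column bound $\sum_e p_e|\Phi(e,f)|\lesssim s\,p_f$. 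Edges not adjacent to $e$ must be handled by propagating the disagreement along alternating trails, using the recursive coupling of \cite[Section~3.1]{CG24}. Along such a trail each additional step costs a factor $s$, which should sum to a geometric series $\sum_k s^k=1/(1-s)$.

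\textbf{Main obstacle.} The hard part is making the trail contributions weighted by $p_f$ rather than counted. A naive total-influence bound grows with the degree. The geometric factor must come from the fact that at fugacity $s<1$ the probability that a vertex is saturated and can propagate influence is itself $O(s)$ relative to $p_f$.

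If the induction does not close, my fallback is the trickle-down or field-dynamics approach of \cite{CCCYZ25}: interpolate the fugacity from $0$ to $s$ and show that the derivative of the best constant $c$ is at most $c^2$. That differential inequality integrates to $c(s)\le 1/(1-s)$ from $c(0)=1$.
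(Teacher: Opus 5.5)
The paper does not prove this statement. It cites \cite[Theorem~1.14]{CCCYZ25}, which the related-work discussion attributes to a trickle-down theorem for field dynamics. Your reduction to $S=\varnothing$ is fine, but the induction has a genuine gap in how you count the slack. The claimed constant $\frac1{1-s}$ does not depend on the instance, so applying the induction hypothesis to the two conditional laws bounds the inner term by $\frac1{1-s}\sum_{f\ne e}p_fa_f^2$. That already spends the full budget on every $f\ne e$. The only slack left is $\frac1{1-s}p_ea_e^2$, not $\frac{s}{1-s}\sum_fp_fa_f^2$. Closing the step at a fixed edge $e$ would therefore require $(1-p_e)\bigl(a_e+\sum_{f\ne e}a_f\Phi(e,f)\bigr)^2\le\frac1{1-s}a_e^2$ for every $a$. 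This fails whenever $a_e=0$ and $a$ is not orthogonal to the $e$-th row of $\Phi$. Concretely, take the star $K_{1,d}$ with center capacity $1$. There $p_e=\frac{s}{1+ds}$ and $\Phi(e,f)=-\frac{s}{1+(d-1)s}$. With $a_e=0$ and $a_f=1$ otherwise, the outer term is $p_e(1-p_e)\pi^2>0$, where $\pi=\frac{(d-1)s}{1+(d-1)s}$, while the available slack is $0$. The theorem holds there only because the true conditional variance is far below what the induction hypothesis certifies, and your scheme cannot see that. You would need a strictly stronger inductive statement or a different local-to-global decomposition, which is the role trickle-down plays.

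The absorption step would also fail on its own terms. Cauchy--Schwarz with $|\Phi(e,f)|$ throws away the signs, yet the negative correlations are what keep the constant near $1$ for small $s$ regardless of degree. In the same star, $P^{-1/2}|\Sigma|P^{-1/2}$ (entrywise absolute values, $P=\diag(p_e)$) equals $(1-2p)I+pJ$. Its top eigenvalue is $1+\frac{(d-2)s}{1+ds}$, which tends to $2$ and exceeds $\frac1{1-s}$ once $s<\frac12$ and $d$ is large. Your column bound $\sum_ep_e|\Phi(e,f)|\lesssim s\,p_f$ is also false there: $\sum_{e\ne f}p_e|\Phi(e,f)|=p_f\frac{(d-1)s}{1+(d-1)s}\approx p_f$ when $ds\gg1$. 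The row factor $\sum_f|\Phi(e,f)|$ in your Cauchy--Schwarz is a total influence, which the paper notes is $\Theta(\sqrt\Delta)$ on regular trees. So the heuristic that each trail step costs a factor $s$ is unsupported, since one step fans out to many edges. Finally, the fallback only reverse-engineers the constant. It is true that $c'\le c^2$ with $c(0)=1$ integrates to $c(s)\le\frac1{1-s}$. But you give no mechanism that produces this Riccati inequality, and even $c(0^+)=1$ uniformly over all graphs is not clear a priori, so the fallback restates the goal rather than proving it.
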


\section{Spectral independence of occupied degrees}\label{sec:degrees}

Degree coupling independence gives a spectral-gap bound for block dynamics. We use this bound to prove \cref{lem:overview-degree} and the revealing estimate needed in \cref{sec:surrogate}.

\begin{definition}[Coupling independence in the degree metric]\label{def:degree-coupling}
Let $\Omega=\prod_{v\in V}\Omega_v$, where each $\Omega_v$ is a finite set of nonnegative integers, and let $\nu$ be a law on $\Omega$. Equip these degree vectors with the \emph{degree metric}
\[
 d_{\mathrm{deg}}(D,D')=\sum_v|D_v-D'_v|,
\]
which measures the total change in occupied degrees across vertices. For every $\Lambda\subseteq V$ and $\tau\in\prod_{v\in\Lambda}\Omega_v$, including impossible pinnings, choose a probability law $\nu^\tau$ on $\Omega$ supported on $D_\Lambda=\tau$. Whenever the pinning is feasible, require $\nu^\tau=\nu(\,\cdot\mid D_\Lambda=\tau)$.

For $C\ge1$, we say that $\nu$ has \emph{degree coupling independence} with constant $C$ if these laws can be chosen so that, for every such $\Lambda,\tau$, every $v\notin\Lambda$, and distinct $k,\ell\in\Omega_v$, there is a coupling of
\[
 D^{(k)}\sim\nu^{\tau,v=k},\qquad
 D^{(\ell)}\sim\nu^{\tau,v=\ell}
\]
such that
\begin{equation}\label{eq:coupling-distance}
 \Ex\bigl[d_{\mathrm{deg}}(D^{(k)},D^{(\ell)})\bigr]\le C|k-\ell|.
\end{equation}
\end{definition}

\begin{remark}
  In the definition of coupling independence, we work with an extension of the conditional distributions to all degree pinnings rather than feasible pinnings.
\end{remark}

The factor $|k-\ell|$ measures the size of the change at $v$. Since the contribution at $v$ is exactly $|k-\ell|$, the expected discrepancy outside $v$ is at most $(C-1)|k-\ell|$.

The following degree coupling bound follows from the recursive coupling in~\cite{CG24}.
\begin{theorem}[Degree coupling independence~{\cite{CG24}}]\label{obs:coupling}
Consider a well-defined Holant measure with positive activities, where each vertex signature $f_v$ is either log-concave with $f_v(0)>0$ or an exact-degree signature (supported at a single degree). Its degree-vector law has degree coupling independence with constant $2$ on $\Omega=\prod_v\operatorname{supp}f_v$.
\end{theorem}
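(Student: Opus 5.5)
We construct the extended conditional laws $\nu^\tau$ and the couplings from the recursive coupling of Chen and Gu~\cite[Section~3.1]{CG24}. The argument tracks degree discrepancies instead of edge discrepancies.

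\emph{Extending the laws.} For a feasible degree pinning $\tau$ on $\Lambda$, let $\nu^\tau$ be the law of $D$ under $\mu(\cdot\mid D_\Lambda=\tau)$. Equivalently, this is the Holant measure in which each $f_v$, $v\in\Lambda$, is replaced by the exact-degree signature $\mathbf 1[\,\cdot=\tau_v]$. This is why the statement allows exact-degree signatures: the class of instances is closed under degree pinning, so an induction or recursion over pinnings stays inside it. For infeasible pinnings we may take any law supported on $D_\Lambda=\tau$. Coupling is only required between $\nu^{\tau,v=k}$ and $\nu^{\tau,v=\ell}$, and if one of these is infeasible we define it as a point mass. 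To keep that case harmless, I would first reduce to adjacent values $\ell=k+1$. The support of $D_v$ under feasible pinnings is an interval by log-concavity and no internal zeros (exact-degree signatures are trivially intervals). Chaining couplings along $k,k+1,\dots,\ell$ and using the triangle inequality then gives $C|k-\ell|$, so it suffices to treat $\ell=k+1$ with both pinnings feasible. Any infeasible endpoint is handled by choosing the extension $\nu^{\tau,v=k}$ consistently, for instance as the nearest feasible conditional law.

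\emph{Coupling for $\ell=k+1$.} Let $S\sim\mu(\cdot\mid D_\Lambda=\tau, D_v=k+1)$. Following~\cite{CG24}, we generate $S'\sim\mu(\cdot\mid D_\Lambda=\tau, D_v=k)$ by removing one edge $vu_1$ of $S$, chosen with suitable probabilities. If $u_1$ now has a deficit that must be repaired (it is pinned or exact-degree), or if resampling requires it, then $u_1$ adds an edge $u_1u_2\notin S$. Then $u_2$ removes an edge, and so on. Log-concavity of $f_u$ is exactly what makes the stopping/continuation probabilities at each step valid, via the ratio monotonicity $r_u(i)\ge r_u(i+1)$. The output is an alternating trail $v=u_0,u_1,\dots,u_t$ with $S'=S\triangle\{u_0u_1,\dots,u_{t-1}u_t\}$.

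\emph{Degree accounting.} Along the trail each internal vertex $u_i$ ($0<i<t$) loses one incident edge and gains another, so $D_{u_i}$ is unchanged. Edges can repeat vertices but not edges, since the trail is built in $S\triangle S'$. A vertex visited several times internally still has net change zero, and only the endpoints $u_0=v$ and $u_t$ change, each by exactly one. Hence $d_{\mathrm{deg}}(D,D')\le 2$ almost surely, which yields constant $2$ for $|k-\ell|=1$.

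\emph{Main obstacle.} The hardest step is verifying that the trail construction of~\cite{CG24} produces the exact conditional law under a pinning of degrees (rather than edges), including at exact-degree vertices. Their construction is stated for edge pinnings, so I would first redo their one-step lemma: conditional on removing $vu_1$, the law at $u_1$ is a mixture of ``stop'' and ``continue by adding an edge,'' with nonnegative weights by log-concavity. For pinned vertices the weight of stopping is zero, and the trail is forced to continue. I would also need to check that the process terminates, which holds because edges are never reused.
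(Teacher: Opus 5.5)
Your treatment of two feasible adjacent root degrees matches the paper. A degree pinning is an exact-degree signature, and the root pinnings $\one_{\{j=k+1\}}$ and $\one_{\{j=k\}}$ differ by a downward shift, so the recursive coupling of~\cite{CG24} applies. The discrepancy cancels at internal trail vertices, giving degree distance at most $2$.

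The gap is in extending this to what \cref{def:degree-coupling} actually demands. It requires a single law $\nu^\tau$ for \emph{every} $\tau\in\prod_{v\in\Lambda}\Omega_v$, supported on $\{D_\Lambda=\tau\}$ and used consistently in every comparison. It also requires couplings for all distinct $k,\ell\in\Omega_v=\operatorname{supp}f_v$, including when $\tau$ is impossible. So ``any law'' on infeasible pinnings will not do: the path coupling in \cref{thm:bessel} passes through such intermediate pinnings.

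Your reduction to feasible adjacent pairs rests on the claim that the conditional support of $D_v$ under a feasible degree pinning is an interval. This is false, because degree pinnings impose parity-type constraints. On the triangle $v,a,c$ with $2$-matching signatures, pinning $D_a=D_c=1$ leaves only $\{ac\}$ and $\{va,vc\}$, so $D_v\in\{0,2\}$ although $1\in\Omega_v$. Hence you cannot always chain through feasible intermediate values. In general you must couple a genuine conditional law with an extended one at cost at most $2$.

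Your proposed extensions do not achieve this. In general a point mass can be far in expectation from a spread-out conditional law. A ``nearest feasible conditional law'' is not supported on $\{D_\Lambda=\tau,\,D_v=k\}$. Neither choice is shown to be consistent across the many pairs in which the same pinning appears.

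The missing idea is the paper's perturbation. Attach $\max\operatorname{supp}f_v$ private leaves to each original vertex $v$, with edge activity $\varepsilon>0$ and leaf signature $(1,1)$, and extend $f_v$ by zero. The new instance still satisfies the hypotheses, and every vector in $\Omega$ is realized by pendant edges alone. So every pinning of original vertices is feasible, and each unpinned root has its whole interval $\operatorname{supp}f_v$ available. Your adjacent coupling now applies to every consecutive pair. Discarding leaf coordinates only decreases the degree distance, and the triangle inequality gives $2|k-\ell|$.

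Finally, let $\varepsilon\downarrow0$ along one subsequence for the finitely many laws and couplings. The limits are supported on the prescribed pinnings and agree with the true conditionals whenever the pinning is feasible under $\nu$. They inherit the coupling bounds, and they are automatically consistent since each pinning comes with a single perturbed law.
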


\begin{proof}
We first construct a coupling for two feasible adjacent root degrees. We then add pendant edges to make all degree pinnings feasible and pass to a limit.

Fix a feasible degree pinning $\tau$ and a vertex $v$ outside its domain. Suppose that both extensions $D_v=k$ and $D_v=k+1$ are feasible. We couple the corresponding conditional laws as follows. Consecutive root pinnings correspond to the signatures $\one_{\{j=k+1\}}$ and $\one_{\{j=k\}}$, which differ by a downward shift. This is the case handled by the coupling algorithm in \cite[Algorithm~1 and Lemma~18]{CG24}; its validity uses log-concavity and allows exact-degree signatures.

The degree bound is preserved at each recursive step. If the two samples assign the same value to an edge, no discrepancy is added. If they disagree, the recursion moves the active defect to the other endpoint and reverses the order of the two samples. Thus successive disagreements contribute opposite signs at their shared vertex: the next disagreement cancels the previous degree change there and moves it to the new endpoint. Inductively, among the edges already processed, only the starting vertex and the current endpoint can have a degree discrepancy, with total magnitude at most two. This remains true even when the recursion revisits a vertex. Thus the coupling has degree distance at most two for these adjacent root degrees.

We now construct the conditional laws for all pinnings in $\Omega$. Attach $b_v=\max\operatorname{supp}f_v$ private leaves to each original vertex $v$, give each new edge activity $\varepsilon>0$ and each new leaf signature $(1,1)$, and extend the original signatures by zero outside their original domains. The degree law $\nu_\varepsilon$ on the original vertices has full support on $\Omega$: every vector in $\Omega$ can be realized using only the pendant edges. Thus every original-coordinate pinning is feasible, and each unpinned root has its entire interval support available. Apply the preceding coupling to each pair of adjacent root degrees after any such pinning and project away the leaf coordinates, which only decreases the degree distance. For arbitrary root degrees $k<\ell$ in the support, every intermediate degree is now feasible. Gluing the couplings for consecutive degrees and applying the triangle inequality gives expected degree distance at most $2(\ell-k)$.

As $\varepsilon\downarrow0$, $\nu_\varepsilon$ converges to $\nu$. Since there are finitely many pinnings and coupling pairs, take one subsequence along which all their conditional laws and couplings converge. The limiting laws $\nu^\tau$ are supported on their prescribed pinnings and agree with ordinary conditioning whenever the pinning is feasible under $\nu$. All coupling bounds pass to the limit, proving \eqref{eq:coupling-distance} for pinnings throughout $\Omega$.
\end{proof}

It is well known that coupling independence implies spectral independence; see \cite[Section~2.1]{CF24}. We apply the same coupling argument directly to dynamics that retain a random subset of degrees.

\begin{theorem}[Block dynamics for occupied degrees]\label{thm:bessel}
Let $\nu$ be a law on $\Omega=\prod_{v\in V}\Omega_v$ with degree coupling independence with constant $2$. Let $R\subseteq V$ be a random subset, independent of $D$, such that
\[
 r:=\max_{v\in V}\Pr(v\in R)<\frac12.
\]
For every real function $F$ of the degree vector,
\begin{equation}\label{eq:reveal-gap}
 \Ex_R\Ex_\nu\bigl[\Var{\nu}{F(D)\mid D_R}\bigr]
 \ge(1-2r)\Var{\nu}{F(D)}.
\end{equation}
In particular, the distribution $\nu$ satisfies
\begin{equation}\label{eq:bessel}
 \sum_v\Var{\nu}{\Ex_\nu[F(D)\mid D_v]}\le2\Var{\nu}{F(D)}.
\end{equation}
\end{theorem}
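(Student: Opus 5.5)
The plan is to read \eqref{eq:reveal-gap} as a spectral statement about a Markov chain on degree vectors, and to bound that chain's eigenvalues by a Wasserstein-contraction (path coupling) argument driven by \cref{def:degree-coupling}.

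\textbf{Reduction to an operator bound.} For a fixed $R$, the law of total variance gives
\[
 \Var{\nu}{F}=\Ex_\nu\Var{\nu}{F\mid D_R}+\Var{\nu}{\Ex_\nu[F\mid D_R]}.
\]
Hence \eqref{eq:reveal-gap} is equivalent to
\[
 \Ex_R\Var{\nu}{\Ex_\nu[F\mid D_R]}\le 2r\Var{\nu}{F}.
\]
Define $\Pi_R F=\Ex_\nu[F\mid D_R]$ and $K=\Ex_R\Pi_R$. Each $\Pi_R$ is an orthogonal projection in $L^2(\nu)$, so $\langle F,KF\rangle_\nu=\Ex_R\|\Pi_RF\|^2$. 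On mean-zero $F$ this equals the left-hand side above. The operator $K$ is self-adjoint and positive semidefinite; it is the Markov chain that draws $R$, keeps $D_R$, and resamples $D_{V\setminus R}$ from $\nu(\cdot\mid D_R)$. It therefore suffices to show that every eigenvalue of $K$ on mean-zero functions is at most $2r$.

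\textbf{Contraction in the degree metric.} I would show that $K$ contracts $W_1$ distance with respect to $d_{\mathrm{deg}}$ by the factor $2r$. By path coupling, it is enough to treat two degree vectors $D,D'$ that differ only at one vertex $v$, say $D_v=k$ and $D'_v=\ell$. Use the same $R$ for both chains.
\begin{itemize}
\item If $v\notin R$, the two pinnings $D_R$ and $D'_R$ coincide. The chains then resample from the same law, and the identity coupling gives distance $0$.
\item If $v\in R$, the pinnings are $\tau,v=k$ versus $\tau,v=\ell$ with $\tau=D_{R\setminus\{v\}}$. Degree coupling independence with constant $2$ gives a coupling with expected distance at most $2|k-\ell|$.
\end{itemize}
Averaging over $R$, the expected distance after one step is at most $2\Pr(v\in R)|k-\ell|\le 2r\,d_{\mathrm{deg}}(D,D')$. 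Path coupling along unit-step paths then yields the factor $2r$ for all pairs.

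This is where the extension of the conditional laws to all pinnings in \cref{def:degree-coupling} is needed. Intermediate vectors on a path between two feasible $D,D'$ may be infeasible, so the chain must be defined using $\nu^\tau$ for every $\tau$.

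\textbf{From contraction to eigenvalues.} The standard argument is as follows. If $Kf=\lambda f$ with $f$ nonconstant, then the Lipschitz constant satisfies $\mathrm{Lip}(Kf)\le 2r\,\mathrm{Lip}(f)$. Since $\mathrm{Lip}(f)>0$ on a finite connected lattice, this gives $|\lambda|\le 2r$.

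I expect the main obstacle to be making this rigorous when $\nu$ lacks full support. The extended chain, built from the laws $\nu^\tau$, lives on all of $\Omega$, whereas the $L^2(\nu)$ spectral statement concerns only $\operatorname{supp}\nu$, and the extended chain need not be reversible off that support. To avoid this, I would first run the argument for the full-support approximations $\nu_\varepsilon$ from the proof of \cref{obs:coupling}. For these, all pinnings are feasible, $K$ is reversible, and the contraction bound holds with constant $2$, since that proof established the coupling bounds for $\nu_\varepsilon$. I would then let $\varepsilon\downarrow0$ along the subsequence used there. Both sides of \eqref{eq:reveal-gap} are continuous in the finitely many probabilities involved: conditioning on feasible pinnings is continuous, and infeasible pinnings carry vanishing mass.

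\textbf{Deducing \eqref{eq:bessel}.} Fix a small $p>0$ with $p<1/2$. Let $R=\{v\}$ with probability $p$ for each vertex $v$ and $R=\varnothing$ otherwise; this requires $p|V|\le1$, and $r=p$. Since $\Var{\nu}{\Ex_\nu[F\mid D_\varnothing]}=0$, the reduced form of \eqref{eq:reveal-gap} reads
\[
 p\sum_v\Var{\nu}{\Ex_\nu[F\mid D_v]}\le 2p\Var{\nu}{F}.
\]
Dividing by $p$ gives \eqref{eq:bessel}.
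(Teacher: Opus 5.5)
The core of your argument is the paper's argument. You run the block dynamics with a shared $R$. You build couplings from \cref{def:degree-coupling} by changing the pinning one coordinate at a time, using the extended laws $\nu^\tau$ at infeasible intermediate pinnings. Averaging over $R$ gives contraction by $2r$ in $d_{\mathrm{deg}}$, and you convert that contraction into an eigenvalue bound as in \cref{lem:chen-contraction}.

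The step that fails as written is routing the spectral conclusion through the full-support approximations $\nu_\varepsilon$. Those come from the pendant-leaf construction in the proof of \cref{obs:coupling}, which exists only for Holant degree laws. The theorem, however, is stated for an arbitrary law $\nu$ with degree coupling independence constant $2$, and no approximating family is available for such $\nu$. So your proof covers only the Holant case. Your continuity argument is fine there, and that case is all the paper later uses, but it does not prove the theorem as stated.

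The detour is also unnecessary, because the obstacle you anticipate does not arise. Take $x,y\in\operatorname{supp}\nu$ and fix $R$. Both $x_R$ and $y_R$ are then feasible. So the coupling you obtain by gluing along the path from $x$ to $y$ is already a coupling of the genuine transition laws $\nu(\cdot\mid D_R=x_R)$ and $\nu(\cdot\mid D_R=y_R)$. The infeasible pinnings appear only inside the gluing, never as endpoints. Both laws are supported in $\operatorname{supp}\nu$, so the true block dynamics is a reversible chain on the finite metric space $(\operatorname{supp}\nu,d_{\mathrm{deg}})$ that contracts by $2r$ for every pair. \cref{lem:chen-contraction} then applies directly, and this is exactly how the paper proceeds.

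Equivalently, in your Lipschitz argument take $\mathrm{Lip}(f)=\max_{x\ne y}|f(x)-f(y)|/d_{\mathrm{deg}}(x,y)$ over pairs in $\operatorname{supp}\nu$. This is positive for every nonconstant $f$, with no connectivity needed. You also never need the chain off the support, or reversibility there.

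Your deduction of \eqref{eq:bessel} is correct. You take $R=\{v\}$ with probability $p$ for each $v$ and $R=\varnothing$ otherwise, with $p\le 1/|V|$ and $p<1/2$. This is slightly cleaner than the paper's version, which uses a uniform singleton with $r=1/n$. That choice needs $n>2$, so the paper must handle $n\le2$ separately via $\Var{}{\Ex[F\mid D_v]}\le\Var{}{F}$.
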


\begin{remark}
Consider the block dynamics that, at each step, draw $R$, keep $D_R$ fixed, and jointly resample the block $D_{V\setminus R}$ from its conditional distribution given $D_R$. The left side of \eqref{eq:reveal-gap} is exactly the Dirichlet form of this chain, so the theorem states that its spectral gap is at least $1-2r$. See~\cite{BCCPSV21} for a detailed explanation.
\end{remark}

For completeness, we give the standard proof that coupling independence implies spectral independence, adapted to the degree metric. We use block dynamics as in \cite[Lemma~33]{CF24} and deduce a spectral-gap bound from coupling contraction as in \cite[Theorem~43]{AJKPV24}.
We use M.~F.~Chen's contraction lemma in the form stated in~\cite[Theorem~13.1]{LP17}.

\begin{lemma}[M.~F.~Chen~{\cite{Chen98,LP17}}]\label{lem:chen-contraction}
Let $(\Omega,d)$ be a finite metric space and let $P$ be the transition matrix of a Markov chain on $\Omega$. Suppose that there exists $\theta\in[0,1)$ such that, for every $x,y\in\Omega$, there is a coupling $(X_1,Y_1)$ of $P(x,\cdot)$ and $P(y,\cdot)$ satisfying
\[
 \Ex_{x,y}[d(X_1,Y_1)]\le\theta d(x,y).
\]
If $\lambda\ne1$ is an eigenvalue of $P$, then $|\lambda|\le\theta$. In particular, the absolute spectral gap satisfies
\[
 \gamma_*\ge1-\theta.
\]
\end{lemma}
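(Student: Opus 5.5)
The plan is the standard Lipschitz-seminorm argument, applied to right eigenvectors of $P$. For a function $f:\Omega\to\mathbb C$, define the Lipschitz seminorm
\[
 \operatorname{Lip}(f)=\max_{x\ne y}\frac{|f(x)-f(y)|}{d(x,y)}.
\]
This maximum exists and is finite because $\Omega$ is finite and $d(x,y)>0$ for $x\ne y$. Moreover, $\operatorname{Lip}(f)=0$ exactly when $f$ is constant.

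\textbf{Step 1: $P$ contracts the seminorm.} Fix $x\ne y$ and let $(X_1,Y_1)$ be the coupling from the hypothesis. Since $(Pf)(x)=\Ex f(X_1)$ and $(Pf)(y)=\Ex f(Y_1)$, we get
\[
 |(Pf)(x)-(Pf)(y)|=\bigl|\Ex_{x,y}[f(X_1)-f(Y_1)]\bigr|
 \le\Ex_{x,y}|f(X_1)-f(Y_1)|.
\]
Each term satisfies $|f(X_1)-f(Y_1)|\le\operatorname{Lip}(f)\,d(X_1,Y_1)$; this also holds when $X_1=Y_1$, since then both sides vanish. Taking expectations and using the coupling hypothesis gives
\[
 |(Pf)(x)-(Pf)(y)|\le\operatorname{Lip}(f)\,\Ex_{x,y}[d(X_1,Y_1)]\le\theta\operatorname{Lip}(f)\,d(x,y).
\]
Dividing by $d(x,y)$ and maximizing over pairs yields $\operatorname{Lip}(Pf)\le\theta\operatorname{Lip}(f)$. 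Working with complex-valued $f$ and the modulus lets this handle possibly complex eigenvalues directly, since $P$ need not be reversible.

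\textbf{Step 2: apply this to an eigenvector.} Let $\lambda\ne1$ be an eigenvalue of $P$, with right eigenvector $f\ne0$, so $Pf=\lambda f$. If $f$ were constant, say $f\equiv c\ne0$, then stochasticity of $P$ gives $Pf=f$, forcing $\lambda=1$, a contradiction. Hence $f$ is nonconstant and $0<\operatorname{Lip}(f)<\infty$. Step 1 then gives
\[
 |\lambda|\operatorname{Lip}(f)=\operatorname{Lip}(\lambda f)=\operatorname{Lip}(Pf)\le\theta\operatorname{Lip}(f),
\]
and dividing by $\operatorname{Lip}(f)>0$ yields $|\lambda|\le\theta$.

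\textbf{Step 3: the spectral-gap consequence.} Every eigenvalue other than the trivial one has modulus at most $\theta$. Since $\theta<1$, this bound applies even to any repeated copies of the eigenvalue $1$, so $1$ is simple. Therefore $\gamma_*=1-\max_{i\ge2}|\lambda_i|\ge1-\theta$.

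The only points needing care are that the eigenvector may be complex, which the modulus-based seminorm handles, and that nonconstancy of $f$ follows from $\lambda\ne1$. I do not expect a genuine obstacle here.
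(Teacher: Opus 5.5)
Your proof is correct and is the standard Lipschitz-seminorm argument of \cite[Theorem~13.1]{LP17}, which is what the paper relies on: it cites this lemma rather than reproving it. Your Step~3 remark that the eigenvalue $1$ is simple is a worthwhile addition, since the paper's $\gamma_*$ counts multiplicities. It holds because a repeated eigenvalue $1$ of the reversible, hence diagonalizable, chain would give a nonconstant fixed vector $f$ with $\operatorname{Lip}(f)\le\theta\operatorname{Lip}(f)$, which is impossible since $\theta<1$.
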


\begin{proof}[Proof of \cref{thm:bessel}]
Define $P$ by drawing $R$, retaining $D_R$, and resampling all other degrees conditionally. This chain is reversible with respect to $\nu$, with Dirichlet form
\[
 \Diri{P}{F,F}=\Ex_R\Ex_\nu[\Var{\nu}{F\mid D_R}].
\]
We will couple one step so that the expected degree distance contracts by a factor at most $2r$.

Fix $x,y\in\operatorname{supp}\nu$ and draw the same $R$ in both chains. Change the pinning $x_R$ to $y_R$ one coordinate at a time. The laws at every intermediate pinning are supplied by \cref{def:degree-coupling}, even when the pinning is impossible under $\nu$. Coupling each consecutive pair and gluing these couplings gives a coupling of $\nu^{x_R}$ and $\nu^{y_R}$ with expected degree distance at most $2\sum_{v\in R}|x_v-y_v|$. Both endpoint pinnings are feasible, so these are the required conditional resampling laws. Averaging over $R$ yields
\[
 \Ex[d_{\mathrm{deg}}(X',Y')]
 \le2\sum_{v\in V}\Pr(v\in R)|x_v-y_v|
 \le2r\,d_{\mathrm{deg}}(x,y).
\]
By \cref{lem:chen-contraction}, $P$ has spectral gap at least $1-2r$, proving \eqref{eq:reveal-gap}.

Finally, when $n=|V|>2$, take $R$ to be a uniformly chosen singleton, so $r=1/n$. By total variance, \eqref{eq:reveal-gap} becomes \eqref{eq:bessel}. For $n\le2$, \eqref{eq:bessel} follows directly from $\Var{}{\Ex[F\mid D_v]}\le\Var{}{F}$. The same argument applies after every feasible degree pinning, since degree coupling independence is preserved by conditioning. By the variance characterization of spectral independence~\cite[Lemma~21]{AJKPV24}, \eqref{eq:bessel} under these pinnings gives $1$-spectral independence in the vertex coordinates.
\end{proof}

Now, we are ready to prove \Cref{lem:overview-degree}.
\begin{proof}[Proof of \cref{lem:overview-degree}]
The dual form of \eqref{eq:bessel} gives, for arbitrary $g_v$,
\begin{equation}\label{eq:degree-additive-spectral}
 \Var{\nu}{\sum_vg_v(D_v)}\le2\sum_v\Var{\nu}{g_v(D_v)}.
\end{equation}
To obtain \eqref{eq:overview-degree}, recall that $q_v=\Pr(D_v<b_v)$ is the probability that $v$ is not saturated, and take $g_v(k)=h_vk$. Then
\[
 \Var{}{\sum_vh_vD_v}\le2\sum_vh_v^2\Var{}{D_v}\le2\sum_vh_v^2\Ex[(D_v-b_v)^2]\le2b^2\sum_vq_vh_v^2.
\]
Here variance is bounded by the mean squared distance from any constant, and
$(D_v-b_v)^2\le b^2\one_{\{D_v<b_v\}}$ since $0\le D_v\le b_v\le b$.
\end{proof}

We also record the following corollary of \cref{thm:bessel}.
\begin{corollary}[A quadratic revealing estimate]\label{lem:quadratic-reveal}
Let $\nu$ satisfy the assumptions of \cref{thm:bessel}, and let $G=(V,E)$ be a simple graph. For each edge $e=uv$, let $K_e=K_e(D_u,D_v)$ be any real function of its endpoint degrees. For every real edge vector $a$,
\begin{equation}\label{eq:quadratic-reveal}
 \Var{}{\sum_ea_eK_e}
 \le6\sum_v\Ex\left[\left(\sum_{e\ni v}a_eK_e\right)^2\right]
   +12\sum_ea_e^2\Ex[K_e^2].
\end{equation}
\end{corollary}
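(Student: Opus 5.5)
The plan is to reduce \eqref{eq:quadratic-reveal} to the additive vertex bound \eqref{eq:degree-additive-spectral}, or more generally to the revealing inequality \eqref{eq:reveal-gap} of \cref{thm:bessel}. The device is to regroup the edge sum into vertex sums. Write $F=\sum_ea_eK_e$ and, for each vertex, $S_v=\sum_{e\ni v}a_eK_e$, so that $\sum_vS_v=2F$ because $G$ is simple and each edge has exactly two endpoints. The right-hand side of \eqref{eq:quadratic-reveal} contains exactly the quantities $\Ex[S_v^2]$ and $a_e^2\Ex[K_e^2]$. So the goal is to show that $\Var{}{F}$ is controlled by the variance left after averaging over a random revealed set, in terms of local star quantities only.

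\emph{Step 1: reduce to a residual sum.} Draw a random set $R\subseteq V$ with $\Pr(v\in R)=r<\tfrac12$, for instance $r=\tfrac14$, and set $U=V\setminus R$. Then \eqref{eq:reveal-gap} gives $\Var{}{F}\le(1-2r)^{-1}\Ex_R\Ex[\Var{}{F\mid D_R}]$. Given $D_R$, every edge with both endpoints in $R$ contributes a $D_R$-measurable constant, since $K_e$ depends only on its endpoint degrees. Hence $\Var{}{F\mid D_R}=\Var{}{A_R\mid D_R}\le\Ex[A_R^2\mid D_R]$, where $A_R$ is the sum of $a_eK_e$ over edges touching $U$.

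\emph{Step 2: split $A_R$ into stars and a correction.} Use the identity
\[
 A_R=\sum_{v\in U}S_v-\sum_{e\subseteq U}a_eK_e,
\]
which corrects for edges inside $U$ being counted twice. Then apply $(x-y)^2\le2x^2+2y^2$, average over the randomness of $U$, and use independence of the memberships $\one_{\{v\in U\}}$ to expand both squares. The diagonal parts produce multiples of $\sum_v\Ex[S_v^2]$ and $\sum_ea_e^2\Ex[K_e^2]$. The constants $6$ and $12$ should come out of these factors of $2$ together with $(1-2r)^{-1}$ and the membership probabilities.

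\emph{Main obstacle: the off-diagonal term.} Expanding $\Ex_U\bigl(\sum_{v\in U}S_v\bigr)^2$ yields a term proportional to $(\sum_vS_v)^2=4F^2$. This is not a local quantity, and it is not even centered, so it cannot simply be absorbed. The first thing I would try is to subtract from $A_R$ a suitable $D_R$-measurable centering, chosen so that the cross term becomes $\Var{}{F}$ times a coefficient strictly below $1-2r$. That surplus would then be absorbed on the left. If no choice of $r$ makes this coefficient small enough, I would change the decomposition instead. One option is to write $F$ as a martingale over a random ordering of vertices and apply \eqref{eq:bessel} to each increment. Another is to use the dual form \eqref{eq:degree-additive-spectral}, conditionally on a random two-coloring of $V$, to the bipartite edges between the color classes, where $K_e$ becomes additive in one side once the other side is revealed, and handle the monochromatic edges by symmetry. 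I expect making the cross term cancel, rather than the diagonal bookkeeping, to be the real difficulty.
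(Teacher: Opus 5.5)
Your proposal has a genuine gap, and it is exactly the step you flag: as set up, the argument cannot close for any $r$. Your bound $\Var{}{F\mid D_R}\le\Ex[A_R^2\mid D_R]$ compares $F$ with the $D_R$-measurable part $Y_R:=\sum_{e=uv}a_eK_e\one_{\{u,v\in R\}}=F-A_R$. Since $R$ is independent of $D$ and $\Ex_R[Y_R\mid D]=r^2F$, averaging over $R$ gives
\[
 \Ex_R\Ex_D\bigl[A_R^2\bigr]=\Ex_D\bigl[\Var{R}{Y_R\mid D}\bigr]+(1-r^2)^2\,\Ex_D[F^2].
\]
Because $(1-r^2)^2>1-2r$ and $\Ex[F^2]\ge\Var{}{F}$, the resulting inequality is vacuous. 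Splitting $A_R$ into stars with $(x-y)^2\le2x^2+2y^2$ cannot help, since the uncentered products $S_vS_w$ with $v\ne w$ reassemble into $(\sum_vS_v)^2=4F^2$. The martingale and two-colouring alternatives are not specified far enough to check.

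The missing idea is to rescale the revealed part so that it is an unbiased estimator of $F$ over the randomness of $R$. By mean-square optimality of conditional expectation, $\Var{}{F\mid D_R}\le\Ex[(F-r^{-2}Y_R)^2\mid D_R]$. Since $\Ex_R[r^{-2}Y_R\mid D]=F$, averaging over $R$ yields exactly $r^{-4}\Ex_D[\Var{R}{Y_R\mid D}]$. The global term then does not need to be absorbed; it vanishes identically, so your ``centering with coefficient below $1-2r$'' works with coefficient $0$.

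The remaining variance must be computed exactly, not by factor-$2$ splittings. Write $\one_{\{v\in R\}}=r+\eta_v$ with independent centered $\eta_v$. Then
\[
 Y_R-r^2F=r\sum_vS_v\eta_v+\sum_{e=uv}a_eK_e\,\eta_u\eta_v .
\]
All these monomials are orthogonal because $G$ is simple, so $\Var{R}{Y_R\mid D}=r^3(1-r)\sum_vS_v^2+r^2(1-r)^2\sum_ea_e^2K_e^2$. Combined with \eqref{eq:reveal-gap}, this gives $(1-2r)\Var{}{F}\le\frac{1-r}{r}\sum_v\Ex[S_v^2]+\frac{(1-r)^2}{r^2}\sum_ea_e^2\Ex[K_e^2]$. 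Taking $r=1/3$ and multiplying by $3$ produces the constants $6$ and $12$.
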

\begin{proof}
Fix $t\in(0,1/2)$, and let $R\subseteq V$ include each vertex independently with probability $t$, independently of $D$. Applying \eqref{eq:reveal-gap} from \cref{thm:bessel} with $F(D)=\sum_ea_eK_e(D_u,D_v)$ gives
\[
 (1-2t)\Var{}{\sum_ea_eK_e}
 \le\Ex_R\Ex_D\left[\Var{}{\sum_ea_eK_e\mid D_R}\right].
\]
We bound the right-hand side using the dependence of each $K_e$ on its two endpoint degrees. For each fixed $R$, the sum $\sum_{e=uv}a_eK_e\one_{\{u,v\in R\}}$ is determined by $D_R$, since every surviving term has both endpoints in $R$. The mean-square optimality of conditional expectation therefore implies
\[
 \Ex_R\Ex_D\left[\Var{}{\sum_ea_eK_e\mid D_R}\right]
 \le\Ex_R\Ex_D\left[
   \left(\sum_ea_eK_e-t^{-2}\sum_{e=uv}a_eK_e\one_{\{u,v\in R\}}\right)^2
 \right].
\]
Conditional on $D$, the coefficients $a_eK_e$ are fixed and
\[
 \Ex_R\left[\sum_{e=uv}a_eK_e\one_{\{u,v\in R\}}\;\middle|\;D\right]
 =t^2\sum_ea_eK_e.
\]
Thus, exchanging the order of the two expectations gives
\[
\begin{aligned}
 &\Ex_R\Ex_D\left[
   \left(\sum_ea_eK_e-t^{-2}\sum_{e=uv}a_eK_e\one_{\{u,v\in R\}}\right)^2
 \right] =t^{-4}\Ex_D\left[
   \Var{R}{\sum_{e=uv}a_eK_e\one_{\{u,v\in R\}}\mid D}
 \right].
\end{aligned}
\]
To compute this variance, condition on $D$ and expand the centered products:
\[
\begin{aligned}
 \sum_{e=uv}a_eK_e\one_{\{u,v\in R\}}-t^2\sum_ea_eK_e
 &=t\sum_v\left(\sum_{e\ni v}a_eK_e\right)\bigl(\one_{\{v\in R\}}-t\bigr)\\
 &+\sum_{e=uv}a_eK_e\bigl(\one_{\{u\in R\}}-t\bigr)\bigl(\one_{\{v\in R\}}-t\bigr).
\end{aligned}
\]
The centered indicators $\one_{\{v\in R\}}-t$ are independent, with mean zero and second moment $t(1-t)$. Since $G$ is simple, distinct linear and quadratic monomials involve different vertex sets. Their product therefore contains a centered indicator appearing exactly once, so independence makes its expectation over $R$ zero, even when the edges share a vertex. Thus all cross terms vanish, giving
\[
\begin{aligned}
 \Var{R}{\sum_{e=uv}a_eK_e\one_{\{u,v\in R\}}\mid D}
 &=t^3(1-t)\sum_v\left(\sum_{e\ni v}a_eK_e\right)^2+t^2(1-t)^2\sum_ea_e^2K_e^2.
\end{aligned}
\]
Collecting the shared linear contributions into vertex sums and eliminating the remaining cross terms avoids introducing a maximum-degree factor.

Combining the preceding estimates yields
\[
\begin{aligned}
 (1-2t)\Var{}{\sum_ea_eK_e}
 &\le\Ex_R\Ex_D\left[\Var{}{\sum_ea_eK_e\mid D_R}\right]\\
 &\le\frac{1-t}{t}
       \sum_v\Ex\left[\left(\sum_{e\ni v}a_eK_e\right)^2\right]
   +\frac{(1-t)^2}{t^2}\sum_ea_e^2\Ex[K_e^2].
\end{aligned}
\]
Taking $t=1/3$ and multiplying by $3$ proves \eqref{eq:quadratic-reveal}.
\end{proof}

\section{Edge-to-vertex variance bound}\label{sec:surrogate}

In this section we prove \cref{lem:overview-transfer}. Work in a fixed feasible residual instance under the assumptions of \cref{thm:holant}, omitting deterministic edges and vertices with no live incident edges. Throughout this section, $b_v$ is the residual support bound, $p_e=\Pr(X_e=1)$, and $q_v=\Pr(D_v<b_v)>0$.

We first outline the proof ideas of~\Cref{lem:overview-transfer}. If we could take $K_e=X_e$ in \cref{lem:quadratic-reveal}, the result would follow immediately. Indeed, Cauchy--Schwarz over the occupied edges at $v$ gives
\[
 \left(\sum_{e\ni v}a_eX_e\right)^2
 \le D_v\sum_{e\ni v}a_e^2X_e
 \le b\sum_{e\ni v}a_e^2X_e.
\]
Taking expectations and summing over vertices counts each edge twice. Since $X_e^2=X_e$ and $\Ex X_e=p_e$, the corollary would therefore give
\[
\begin{aligned}
 \Var{}{\sum_ea_eX_e}
 &\le6\sum_v\Ex\left[\left(\sum_{e\ni v}a_eX_e\right)^2\right]
       +12\sum_ep_ea_e^2\le(12b+12)\sum_ep_ea_e^2.
\end{aligned}
\]
This would prove \cref{lem:overview-transfer} with $C_1=12(b+1)$ and $C_2=0$; the required dependence of $C_1$ holds because $T\ge1$.

For an edge $e=uv$, the endpoint degrees $D_u,D_v$ do not in general determine $X_e$, so \cref{lem:quadratic-reveal} cannot be applied directly to the edge indicators. We instead construct a surrogate $K_e=\phi_e(D_u,D_v)$ that depends only on the endpoint degrees and approximates $X_e$.
Specifically,
we require $K_e$ and $X_e$ to have the same conditional mean given all other edge indicators:
\begin{align}\label{eq:conditional-eq}
 \Ex[K_e\mid X_{-e}]=\Ex[X_e\mid X_{-e}],
\end{align}
where $X_{-e}$ denotes all edge indicators other than $X_e$. This gives $\Ex K_e=p_e$ and $\Ex[K_eX_f]=\Ex[X_eX_f]$ for $f\ne e$.

Whenever these moment identities holds, put $F=\sum_ea_eX_e$ and $F_K=\sum_ea_eK_e$. The following identity transfers the variance bound with only an individual-edge correction:
\begin{equation}\label{eq:comparison-identity}
 \Var{}{F}=\Var{}{F_K}
 +2\sum_ea_e^2\bigl(\Var{}{X_e}-\Cov(X_e,K_e)\bigr)-\Var{}{F-F_K}.
\end{equation}
Indeed, for distinct edges $e\ne f$, the moment identities give $\Cov(X_e,K_f)=\Cov(X_e,X_f)$. Thus the terms indexed by $e\ne f$ cancel when we subtract $\Cov(F,F_K)$ from $\Var{}{F}$, yielding
\[
 \Var{}{F}-\Cov(F,F_K)
 =\sum_ea_e^2\bigl(\Var{}{X_e}-\Cov(X_e,K_e)\bigr).
\]
Expanding $\Var{}{F-F_K}=\Var{}{F}+\Var{}{F_K}-2\Cov(F,F_K)$ then proves \eqref{eq:comparison-identity}.

We first illustrate this approach for ordinary matchings, where a simple surrogate allows us to prove \cref{lem:overview-transfer} directly using \cref{lem:quadratic-reveal} and \eqref{eq:comparison-identity}.

\begin{proof}[Proof of \cref{lem:overview-transfer} for matchings]
For matchings at activity $\lambda>0$, we prove \cref{lem:overview-transfer} with $C_1=2+36\lambda$ and $C_2=6$. Set
\[
 K_{uv}=\lambda\one_{\{D_u=D_v=0\}},\qquad q_v=\Pr(D_v=0).
\]
Adding or removing $e$ gives $\Ex K_e=p_e$ and $\Ex[K_eX_f]=\Ex[X_eX_f]$ for $f\ne e$. Moreover, $X_eK_e=0$ and $K_e^2=\lambda K_e$.

The star sum vanishes when $D_v=1$ and has mean $\sum_{e\ni v}p_ea_e$. Its conditional mean given $D_v=0$ is therefore $q_v^{-1}\sum_{e\ni v}p_ea_e$. The law of total variance gives the equality below:
\begin{align*}
 \Var{}{\sum_{e\ni v}a_eK_e}
 &=q_v\Var{}{\sum_{e\ni v}a_eK_e\mid D_v=0}
   +\frac{1-q_v}{q_v}\left(\sum_{e\ni v}p_ea_e\right)^2\\
 &\le2\lambda\sum_{e\ni v}p_ea_e^2
   +\frac{1-q_v}{q_v}\left(\sum_{e\ni v}p_ea_e\right)^2.
\end{align*}
For the inequality, apply \cref{lem:overview-degree} conditional on $D_v=0$: the law is the matching law on $G-v$, and each $K_{vu}=\lambda\one_{\{D_u=0\}}$ depends on a distinct neighboring degree. Use also $q_v\Ex[K_e^2\mid D_v=0]=\Ex K_e^2=\lambda p_e$.
By \eqref{eq:comparison-identity} and $X_eK_e=0$, we obtain the first inequality below. For the second, apply \cref{lem:quadratic-reveal} to $K_e-p_e$, use $\Var{}{K_e}\le\lambda p_e$, and count each edge in two stars:
\begin{align*}
 \Var{}{F}
 &\le\Var{}{F_K}+2\sum_ep_ea_e^2
 \le(2+36\lambda)\sum_ep_ea_e^2
   +6\sum_v\frac{1-q_v}{q_v}\left(\sum_{e\ni v}p_ea_e\right)^2.
\end{align*}
Since $1-q_v\le1$, this proves \cref{lem:overview-transfer}. The argument in \cref{sec:overview} then gives $O((1+\lambda)^2)$ spectral independence.
\end{proof}

\subsection{Harmonic surrogates}\label{sec:harmonic-surrogates}

For general log-concave signatures, we construct the surrogate from the conditional-mean requirement~\eqref{eq:conditional-eq}. Given $X_{-e}$, the absent and present states of an available edge have endpoint degrees $(i,j)$ and $(i+1,j+1)$, with relative weights $1:\theta_e(i,j)$. Thus the requirement becomes $\phi_e(i,j)+\theta_e(i,j)\phi_e(i+1,j+1)=\theta_e(i,j)$. Setting $\phi_e=0$ when either endpoint is saturated gives the following backward recursion.

\begin{definition}[Harmonic surrogates]\label{def:harmonic-surrogates}
For each edge $e=uv$, define $\phi_e(i,j)$ for $0\le i\le b_u$ and $0\le j\le b_v$ by
\begin{equation}\label{eq:harmonic}
 \begin{split}
 \phi_e(i,j)&=0\quad\text{if }i=b_u\text{ or }j=b_v,\\
 \phi_e(i,j)&=\theta_e(i,j)\{1-\phi_e(i+1,j+1)\}
       \quad(i<b_u,\ j<b_v),
 \end{split}
\end{equation}
and set $K_e=\phi_e(D_u,D_v)$.
\end{definition}

\begin{lemma}[Matching conditional means]\label{lem:moment}
For every edge $e$,
\begin{equation}\label{eq:moment}
 \Ex[K_e\mid X_{-e}]=\Ex[X_e\mid X_{-e}].
\end{equation}
Consequently, $\Ex K_e=p_e$ and $\Ex[K_eX_f]=\Ex[X_eX_f]$ for $f\ne e$. Also, for either endpoint $v$ of $e=uv$,
\begin{equation}\label{eq:insertion}
 p_e=\Ex[(1-X_e)\theta_e(D_u,D_v)]\le\Theta q_v.
\end{equation}
\end{lemma}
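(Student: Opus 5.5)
The plan is to condition on $X_{-e}$ and reduce \eqref{eq:moment} to a two-point computation, which is exactly what the recursion \eqref{eq:harmonic} was designed for. Fix $e=uv$ and a configuration $X_{-e}$. Let $i$ and $j$ be the numbers of occupied edges in $X_{-e}$ at $u$ and at $v$, so that $D_u\in\{i,i+1\}$ and $D_v\in\{j,j+1\}$ according to $X_e$.

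First I will dispose of the unavailable case. If $i=b_u$ or $j=b_v$, then adding $e$ has weight zero (through $f_u(b_u+1)=0$ or $f_v(b_v+1)=0$). Hence $X_e=0$ almost surely given $X_{-e}$, so $D_u=i$ and $D_v=j$. Both conditional means are then $0$, the left one because $\phi_e(i,j)=0$ on the saturated boundary.

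Next comes the available case, $i<b_u$ and $j<b_v$. The ratio of the weight of $X_{-e}\cup\{e\}$ to that of $X_{-e}$ is $\lambda_e\frac{f_u(i+1)}{f_u(i)}\frac{f_v(j+1)}{f_v(j)}=\theta_e(i,j)$. The factors at all other vertices are unchanged, and $f_u(i),f_v(j)>0$ because $X_{-e}$ is feasible. Therefore $\Pr(X_e=1\mid X_{-e})=\theta/(1+\theta)$ with $\theta=\theta_e(i,j)$, and
\[\Ex[K_e\mid X_{-e}]=\frac{\phi_e(i,j)+\theta\,\phi_e(i+1,j+1)}{1+\theta}=\frac{\theta}{1+\theta}\]
by \eqref{eq:harmonic}. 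I will allow $\phi_e(i+1,j+1)$ to be a boundary value, which is $0$. This proves \eqref{eq:moment}. Taking expectations gives $\Ex K_e=p_e$. For $f\ne e$, $X_f$ is $X_{-e}$-measurable, so the tower property gives $\Ex[K_eX_f]=\Ex[X_eX_f]$.

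For \eqref{eq:insertion}, I will use the same two-point structure with the function $(1-X_e)\theta_e(D_u,D_v)$, extending $\theta_e$ by zero at saturated degrees (as $r_v(b_v)=0$ does). Given $X_{-e}$ in the available case, its conditional expectation is $\frac{1}{1+\theta}\theta=\Pr(X_e=1\mid X_{-e})$. In the unavailable case both sides are $0$. Averaging gives $p_e=\Ex[(1-X_e)\theta_e(D_u,D_v)]$. Finally, $\theta_e(D_u,D_v)$ vanishes unless $D_v<b_v$, and it is at most $\Theta$ by \eqref{eq:odds-assumption}. So the expectation is at most $\Theta\Pr(D_v<b_v)=\Theta q_v$, and the same holds for $u$.

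The one delicate point is bookkeeping in the residual instance. Here $f$ and $b_v$ are the shifted and truncated residual signatures, and $D_u$ counts only live edges, which are consistent with the definition of $\theta$ after pinning. I would also verify that the ratio $f_u(i+1)/f_u(i)$ is exactly $r_u(i)$ on the relevant range, which no internal zeros guarantees. Nothing beyond this appears hard.
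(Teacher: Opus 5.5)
Your proposal is correct and follows the paper's proof essentially verbatim. You condition on $X_{-e}$, use the zero boundary of $\phi_e$ in the unavailable case, and combine the $1:\theta_e(i,j)$ two-state weights with the recursion \eqref{eq:harmonic}. You then reuse the same two-state calculation for $(1-X_e)\theta_e(D_u,D_v)$ and bound its integrand by $\Theta\one_{\{D_v<b_v\}}$. Your extra bookkeeping (that $X_{-e}$ itself is feasible by downward closure, and that $r_v(b_v)=0$ makes $\theta_e$ vanish at saturation) only makes explicit what the paper leaves implicit.
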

\begin{proof}
Condition on $X_{-e}$. If $e$ is unavailable, an endpoint is saturated and both $K_e$ and $X_e$ vanish. Otherwise let $i,j$ be the absent endpoint degrees and $\theta=\theta_e(i,j)$. The two conditional weights are $1:\theta$, so
\[
 \Ex[K_e\mid X_{-e}]
 =\frac{\phi_e(i,j)+\theta\phi_e(i+1,j+1)}{1+\theta}
 =\frac{\theta}{1+\theta}
 =\Ex[X_e\mid X_{-e}].
\]
Multiply by $1$ or $X_f$ and average to obtain the moment identities. The same two-state calculation gives the equality in \eqref{eq:insertion}; its integrand is at most $\Theta\one_{\{D_v<b_v\}}$.
\end{proof}

\begin{lemma}[Uniform surrogate bound]\label{lem:surrogate-size}
For $L=L_b(\Theta)$ from \eqref{eq:main-constant}, $|\phi_e(i,j)|\le L$. If $\Theta\le1$, then $0\le\phi_e(i,j)\le\Theta$.
\end{lemma}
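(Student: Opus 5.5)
The plan is to unroll the backward recursion \eqref{eq:harmonic} along a diagonal and use log-concavity to control the resulting alternating sum.

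\textbf{Unrolling.} Fix $e=uv$ and $(i,j)$ with $i<b_u$, $j<b_v$. Set $\theta_m=\theta_e(i+m-1,j+m-1)$ for $m=1,\dots,k$, where
\[
 k=\min\{b_u-i,\;b_v-j\}\le b
\]
is the number of steps before the diagonal reaches a saturated index, at which $\phi_e=0$. Iterating $\phi_e(i,j)=\theta_1\{1-\phi_e(i+1,j+1)\}$ gives the finite alternating sum
\[
 \phi_e(i,j)=\sum_{m=1}^k(-1)^{m+1}P_m,\qquad P_m=\theta_1\cdots\theta_m .
\]
By \eqref{eq:odds-assumption}, each $P_m$ is positive and satisfies $P_m\le\Theta^m\le\max\{\Theta,\Theta^b\}=L$, since $m\le k\le b$.

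\textbf{Monotonicity of the odds.} Log-concavity with no internal zeros implies that $r_u(\cdot)$ is nonincreasing on $\{0,\dots,b_u-1\}$, and similarly for $r_v$. Hence $\theta_1\ge\theta_2\ge\cdots\ge\theta_k$. Consequently the sequence $(P_m)$ is unimodal. Let $s$ be the last index with $\theta_s\ge1$, or $s=0$ if there is none. Then $P_m$ is nondecreasing for $m\le s$ and nonincreasing for $m\ge s$.

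\textbf{Bounding the alternating sum.} I would use the elementary fact that an alternating sum of a monotone nonnegative sequence lies between $0$ and its largest term, with the sign of its largest-magnitude term. Concretely:
\begin{itemize}
\item \emph{Head.} The part $\sum_{m\le s}(-1)^{m+1}P_m$ has nondecreasing terms. Pairing terms from the top shows it lies in $[0,P_s]$ if $s$ is odd and in $[-P_s,0]$ if $s$ is even.
\item \emph{Tail.} The part $\sum_{m>s}(-1)^{m+1}P_m$ has nonincreasing terms. It has the sign of its first term $(-1)^{s}P_{s+1}$ and magnitude at most $P_{s+1}\le P_s$.
\end{itemize}
In either parity case the head and tail have opposite signs and each has magnitude at most $\max_m P_m$. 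Therefore $|\phi_e(i,j)|\le\max_mP_m\le L$. When $s=0$ only the tail is present and $\theta_1\le1$ already gives the bound.

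\textbf{The case $\Theta\le1$.} This follows more simply by induction along the diagonal, starting from the saturated boundary value $0$. If $\phi_e(i+1,j+1)\in[0,\Theta]\subseteq[0,1]$, then
\[
 \phi_e(i,j)=\theta\{1-\phi_e(i+1,j+1)\}\in[0,\theta]\subseteq[0,\Theta].
\]
Equivalently, this is the case $s=0$ above, where $(P_m)$ is nonincreasing and the alternating sum lies in $[0,P_1]$.

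\textbf{Main obstacle.} The delicate step is the general case $\Theta>1$. A naive triangle inequality only gives $b\,\Theta^b$. The unimodality coming from log-concavity is exactly what prevents the head and tail from adding constructively, so the parity bookkeeping at the peak index $s$ must be checked carefully.
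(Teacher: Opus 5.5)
Your proof is correct, but it takes a different route from the paper's. The paper does not unroll the recursion and does not use log-concavity. For $\Theta\ge1$ it proves, by backward induction on $\ell=\min\{b_u-i,b_v-j\}$, the asymmetric two-sided invariant $1-\Theta^\ell\le\phi_e(i,j)\le\Theta^\ell$. The inductive hypothesis gives $1-\phi_e(i+1,j+1)\in[1-\Theta^{\ell-1},\Theta^{\ell-1}]$, whose lower endpoint is nonpositive. Together with $0\le\theta_e(i,j)\le\Theta$ this gives $1-\Theta^\ell\le\Theta-\Theta^\ell\le\phi_e(i,j)\le\Theta^\ell$, hence $|\phi_e|\le\Theta^\ell\le L$. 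Only the pointwise bound on the odds is used.

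Your argument instead writes $\phi_e(i,j)=\sum_{m=1}^k(-1)^{m+1}P_m$. It then uses monotone insertion odds (from log-concavity with no internal zeros) to make $(P_m)$ unimodal, and checks that the head and tail around the peak have opposite signs. These steps all hold, including both parity cases and the case $s=0$. What your route buys is a sharper, odds-adaptive estimate $|\phi_e(i,j)|\le\max_mP_m$, which can be far below $\Theta^\ell$ when the odds decay. It costs an extra hypothesis and parity bookkeeping that the paper's two-line invariant avoids.

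Your closing remark overstates the role of log-concavity. Unimodality is genuinely needed for your sharper bound $\max_mP_m$: with non-monotone odds $\theta_1=\theta_3=2$, $\theta_2=1/2$, one gets $\phi_e=3>\max_mP_m=2$. But unimodality is not what prevents the naive $b\,\Theta^b$ loss, since the bound $|\phi_e|\le\Theta^b$ holds for arbitrary odds in $[0,\Theta]$.
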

\begin{proof}
Let $\ell=\min\{b_u-i,b_v-j\}\le b$. If $\Theta\le1$, induction in \eqref{eq:harmonic}, starting from the zero boundary, gives $0\le\phi_e\le\Theta$.

For $\Theta\ge1$, use the stronger invariant
$1-\Theta^\ell\le\phi_e(i,j)\le\Theta^\ell$, which holds at $\ell=0$. The inductive hypothesis puts $1-\phi_e(i+1,j+1)$ in $[1-\Theta^{\ell-1},\Theta^{\ell-1}]$. Since its lower endpoint is nonpositive and $0\le\theta_e(i,j)\le\Theta$,
\[
 1-\Theta^\ell
 \le\Theta(1-\Theta^{\ell-1})
 \le\phi_e(i,j)
 \le\Theta^\ell.
\]
Thus $|\phi_e(i,j)|\le\Theta^\ell\le\Theta^b=L$.
\end{proof}

\subsection{From star variances to edge variance}\label{sec:stars}\label{sec:edge-comparison}

The star sum has mean $\sum_{e\ni v}p_ea_e$ and vanishes when $v$ is saturated. The law of total variance on $A_v=\{D_v<b_v\}$ therefore gives the collective term below exactly. We bound the variance within $A_v$ by comparing with the simpler boundary values of the recursion.

\begin{samepage}
\begin{lemma}[Centered-star bound]\label{lem:source}
Put
\begin{equation}\label{eq:star-constant}
 A_\star=4\Theta+72(b-1)T^2,
\end{equation}
where $T$ is defined in \eqref{eq:main-constant}. For every vertex $v$ and every real edge vector $a$,
\begin{equation}\label{eq:source-local}
 \Var{}{\sum_{e\ni v}a_eK_e}
 \le A_\star\sum_{e\ni v}p_ea_e^2
       +\frac{1-q_v}{q_v}\left(\sum_{e\ni v}p_ea_e\right)^2.
\end{equation}
\end{lemma}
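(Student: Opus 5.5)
We follow the template of the matching case. Write $Y_v=\sum_{e\ni v}a_eK_e$ and $A_v=\{D_v<b_v\}$. Since $\phi_e(i,j)=0$ whenever $i=b_v$ or $j=b_v$, the sum $Y_v$ vanishes off $A_v$. Its mean is $\sum_{e\ni v}p_ea_e$ by \cref{lem:moment}. The law of total variance then gives exactly
\[
 \Var{}{Y_v}=q_v\Var{}{Y_v\mid A_v}+\frac{1-q_v}{q_v}\Bigl(\sum_{e\ni v}p_ea_e\Bigr)^2 .
\]
So it suffices to prove $q_v\Var{}{Y_v\mid A_v}\le A_\star\sum_{e\ni v}p_ea_e^2$. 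We further condition on the exact value $D_v=k<b_v$. By total variance once more (the between-$k$ part is bounded by mean squared distance from zero), it is enough to bound
\[
 \sum_{k<b_v}\Pr(D_v=k)\,\Ex\bigl[(Y_v-c_k)^2\mid D_v=k\bigr]
\]
for suitable constants $c_k$.

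\textbf{Case $b_v=1$ (the $\Theta$ term).} Here $D_v=0$ on $A_v$. Conditionally, the law is the Holant law on $G-v$, with the signatures of the neighbours unchanged, and $K_{vu}=\theta_{vu}(0,D_u)\{1-0\}$ depends only on $D_u$. By \cref{obs:coupling} this conditioned law has degree coupling independence with constant $2$. Hence \cref{lem:overview-degree}, in its general $g_v$ form, gives
\[
 \Var{}{Y_v\mid D_v=0}\le 2\sum_{e\ni v}a_e^2\Ex[K_e^2\mid D_v=0].
\]
Since $0\le K_e\le\Theta$, we get $\Ex[K_e^2]\le\Theta\Ex K_e=\Theta p_e$, using $q_v\Ex[K_e^2\mid D_v=0]=\Ex K_e^2$. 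This yields a contribution of $2\Theta\sum p_ea_e^2\le 4\Theta\sum p_ea_e^2$.

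\textbf{General case: decomposition.} Given $D_v=k$, each $K_{vu}=\phi_{vu}(k,D_u)$ is a function of $D_u$ alone. The conditioning, however, also couples which edges at $v$ are occupied. We would decompose
\[
 K_e=\phi_e(k,D_u)=\phi_e^{(0)}(D_u)+\bigl[\phi_e(k,D_u)-\phi_e^{(0)}(D_u)\bigr],
\]
where $\phi^{(0)}$ is the \emph{boundary} value obtained when the edge $e$ is treated as unavailable at $v$. The leading term $\theta_e(k,D_u)$ behaves like the $b=1$ case and contributes $O(\Theta)$. The correction is supported on configurations where the recursion \eqref{eq:harmonic} runs for at least one more step, so it is present only when $b-1\ge1$. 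By \cref{lem:surrogate-size} the correction is bounded by $O(L)$.

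\textbf{General case: bounding the correction.} We would bound the correction's second moment by $T^2$ times $p_e$. The idea is that the correction is nonzero only if there is room to add $e$, and any configuration with $X_e=0$ and $e$ addable has weight comparable to the one with $X_e=1$. That comparison costs a factor $1/\min\{1,\gamma\}$ via \eqref{eq:odds-assumption}, which produces $T=L/\min\{1,\gamma\}$. With this estimate, applying \cref{lem:overview-degree} in the conditioned law on $G-v$ (pinning $D_v=k$ truncates and shifts the signatures of $v$'s neighbours, preserving log-concavity) gives variance at most $2\sum a_e^2\Ex[\text{corr}_e^2\mid D_v=k]$. After Cauchy--Schwarz with constant splitting, this leads to the $72(b-1)T^2$ term.

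\textbf{Main obstacle.} The hardest step is the conditioning on $D_v=k$ itself. Once the degree of $v$ is fixed to $k>0$, the remaining law is not simply a Holant law on $G-v$: the $k$ occupied edges at $v$ form an exact-degree constraint. We would handle this by keeping $v$ as an exact-degree vertex, which \cref{obs:coupling} explicitly allows, and applying \cref{thm:bessel} to the degree vector including the neighbours. The variance bound must then be applied to $\sum_u g_u(D_u)$ with the vertex $v$ frozen, and we must check that the factor $2$ survives.
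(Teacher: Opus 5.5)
There is a genuine gap, and it sits exactly where degree-independence has to be earned.

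Your within-slice analysis is sound, and simpler than you make it. Given $D_v=k$, the law is a Holant law with an exact-degree signature at $v$, which \cref{obs:coupling} already covers, so the ``main obstacle'' you flag is not one. In that slice $Y_v=\sum_{e=vu}a_e\phi_e(k,D_u)$ is a sum of functions of distinct neighbour degrees. Hence $\sum_{k<b_v}\Pr(D_v=k)\Var{}{Y_v\mid D_v=k}\le2\sum_{e\ni v}a_e^2\Ex[K_e^2]=O(T^2)\sum_{e\ni v}p_ea_e^2$, with no decomposition at all.

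What you never control is the between-slice term $q_v\Var{}{\Ex[Y_v\mid D_v]\mid A_v}$. Bounding it by the mean square from zero, as your parenthesis proposes, amounts to bounding $\Ex[Y_v^2\one_{A_v}]\ge(\sum_{e\ni v}p_ea_e)^2$. That reintroduces a collective term of the wrong size. And if the constants $c_k$ depend on $k$, your displayed sum is not an upper bound on $q_v\Var{}{Y_v\mid A_v}$ at all.

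Concretely, take uniform $2$-matchings on a star with centre $v$ and $d$ leaves, and $a\equiv1$. Then $Y_v=(d-D_v)\one_{A_v}$ and $\sum_ep_ea_e^2\to2$, yet $\Ex[Y_v^2\one_{A_v}]\sim2d$. The true $q_v\Var{}{Y_v\mid A_v}$ is $O(d^{-2})$ and consists entirely of between-slice variance.

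The missing idea is how to make the dependence of the star on $D_v$ cheap, and the paper does not slice. It uses a single boundary surrogate $\widehat K_e=\one_{A_v}\theta_e(b_v-1,D_u)$. This is frozen at the top level $b_v-1$, not at the current $k$ as in your leading term $\theta_e(k,D_u)$, so on $A_v$ it depends on $D_u$ alone. Then \cref{lem:overview-degree} applies once in the law conditioned on $A_v$ (upper truncation of $f_v$ preserves log-concavity). That single application absorbs the within- and between-slice fluctuations of this part together.

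The remainder $R_e=K_e-\widehat K_e$ is supported on $\{D_v\le b_v-2\}$ and is controlled only through its raw second moment. Conditional on the exterior, Cauchy--Schwarz over the $d$ available edges at $v$ costs a factor $d$. This factor is cancelled by the injection $(J,f)\mapsto J\cup\{e,f\}$, which gives $d\Pr(D_v\le b_v-2)\le8(b_v-1)\min\{1,\gamma\}^{-2}\Pr(X_e=1)$ for every available $e$.

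Your per-edge estimate (correction second moment $\lesssim T^2p_e$) cannot substitute for this. Per-edge bounds only control within-slice variance, which was never the difficulty. The collective shift of all $d$ star terms as $D_v$ drops below $b_v-1$ requires precisely this $1/d$ gain from two vacancies at $v$.
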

\end{samepage}
We defer the proof of \cref{lem:source} to \cref{sec:star-moments} and first deduce \cref{lem:overview-transfer}.

\begin{proof}[Proof of \cref{lem:overview-transfer}]
First take a single nonzero coefficient in \eqref{eq:source-local}. By \eqref{eq:insertion},
\begin{equation}\label{eq:surrogate-variance}
 \Var{}{K_e}\le A_\star p_e+\frac{1-q_v}{q_v}p_e^2
 \le(A_\star+\Theta)p_e\qquad(e\ni v).
\end{equation}
Thus the individual variance needs no separate proof.

By \cref{lem:moment}, the moment identities needed for \eqref{eq:comparison-identity} hold. Drop its final nonpositive term and use
$2|\Cov(X_e,K_e)|\le\Var{}{X_e}+\Var{}{K_e}$ to obtain
\[
 \Var{}{F}\le\Var{}{F_K}+3\sum_ea_e^2\Var{}{X_e}+\sum_ea_e^2\Var{}{K_e}.
\]
Apply \cref{lem:quadratic-reveal} to the centered functions $K_e-p_e$, so its second moments become variances. Then use \eqref{eq:source-local}, \eqref{eq:surrogate-variance}, and $\Var{}{X_e}\le p_e$. Each edge occurs in two stars, so
\begin{align*}
 \Var{}{F}
 &\le6\sum_v\Var{}{\sum_{e\ni v}a_eK_e}
       +13\sum_ea_e^2\Var{}{K_e}+3\sum_ea_e^2\Var{}{X_e}\\
 &\le(25A_\star+13\Theta+3)\sum_ep_ea_e^2
       +6\sum_v\frac{1-q_v}{q_v}\left(\sum_{e\ni v}p_ea_e\right)^2.
\end{align*}
Using $1-q_v\le1$ and substituting \eqref{eq:star-constant} gives
\begin{equation}\label{eq:edge-source}
 \begin{aligned}
 \Var{}{\sum_ea_eX_e}
 &\le\bigl[3+113\Theta+1800(b-1)T^2\bigr]\sum_ep_ea_e^2
 +6\sum_v\frac{\left(\sum_{e\ni v}p_ea_e\right)^2}{q_v}.
 \end{aligned}
\end{equation}
This proves \cref{lem:overview-transfer} with $C_1=3+113\Theta+1800(b-1)T^2$ and $C_2=6$, preserving the claimed parameter dependence.
\end{proof}

\subsection{Proof of the centered-star bound}\label{sec:star-moments}

\begin{proof}[Proof of \cref{lem:source}]
Fix $v$ and write $A_v=\{D_v<b_v\}$, with $\Pr(A_v)=q_v$. Every $K_e$ in the star vanishes when $A_v$ does not occur, and \cref{lem:moment} gives $\Ex K_e=p_e$. Thus the conditional mean is
\[
 \Ex\left[\sum_{e\ni v}a_eK_e\,\middle|\,\one_{A_v}\right]
 =\frac{\one_{A_v}}{q_v}\sum_{e\ni v}p_ea_e.
\]
Apply the law of total variance by conditioning on $\one_{A_v}$. The conditional variance is zero when $A_v$ does not occur, and $\Var{}{\one_{A_v}}=q_v(1-q_v)$, so
\begin{equation}\label{eq:star-centering}
 \begin{aligned}
 \Var{}{\sum_{e\ni v}a_eK_e}
 &=q_v\Var{}{\sum_{e\ni v}a_eK_e\mid A_v}
     +\left(\frac{\sum_{e\ni v}p_ea_e}{q_v}\right)^2\Var{}{\one_{A_v}}\\
 &=q_v\Var{}{\sum_{e\ni v}a_eK_e\mid A_v}
     +\frac{1-q_v}{q_v}\left(\sum_{e\ni v}p_ea_e\right)^2.
 \end{aligned}
\end{equation}
It remains to bound the conditional variance term $\Var{}{\sum_{e\ni v}a_eK_e\mid A_v}$.
The difficulty here is that all the $K_e$ share the same variable $D_v$.
For $e=vu$, we introduce an approximation $\widehat{K}_e$ that depends only on $D_u$ after conditioning on $A_v$:
\[
 \widehat{K}_e:=\one_{A_v}\theta_e(b_v-1,D_u) = \one_{A_v}\lambda_er_v(b_v-1)r_u(D_u),\qquad
 R_e:=K_e-\widehat{K}_e,
\]
where $R_e$ is the approximation error.
When $D_v=b_v-1$, the recursion has one step left and gives $K_e=\widehat{K}_e$. When $D_v=b_v$, both terms defining $R_e$ vanish. Hence
\begin{equation}\label{eq:remainder-support}
 R_e=0\ \text{unless }D_v\le b_v-2,
 \qquad 0\le \widehat{K}_e\le\Theta,\qquad |R_e|\le2L.
\end{equation}
We have $K_e=\widehat{K}_e+R_e$. Using
$\Var{}{U+W}\le2\Var{}{U}+2\Var{}{W}$ and bounding the last variance by a second moment gives
\begin{equation}\label{eq:star-split}
 \begin{aligned}
 q_v\Var{}{\sum_{e\ni v}a_eK_e\mid A_v}
 &\le2q_v\Var{}{\sum_{e=vu}a_e\widehat{K}_e\mid A_v}
 +2\Ex\left[\left(\sum_{e\ni v}a_eR_e\right)^2\right].
 \end{aligned}
\end{equation}
We bound these two fluctuation terms separately.

\medskip\noindent\emph{Boundary fluctuation.}
Apply \cref{lem:overview-degree} under the conditioning on $A_v$: upper-truncating $f_v$ preserves log-concavity, and each $\widehat{K}_e$ depends only on $D_u$, with distinct neighbors $u$ since $G$ is simple. As $\widehat{K}_e=0$ when $A_v$ does not occur, this gives the first inequality below:
\begin{equation}\label{eq:boundary-fluctuation}
 \begin{aligned}
 q_v\Var{}{\sum_{e=vu}a_e\widehat{K}_e\mid A_v}
 &\le2\sum_{e\ni v}a_e^2\Ex[\widehat{K}_e^2]
 \le2\Theta^2(1+\gamma^{-1})\sum_{e\ni v}p_ea_e^2.
 \end{aligned}
\end{equation}
For the second inequality, first split according to $X_e$ and use $0\le\widehat{K}_e\le\Theta$:
\begin{align*}
 \Ex[\widehat{K}_e^2]
 &=\Ex[X_e\widehat{K}_e^2]+\Ex[(1-X_e)\widehat{K}_e^2]
 \le\Theta^2p_e+\Ex[(1-X_e)\widehat{K}_e^2].
\end{align*}
Note that, whenever $\widehat{K}_e>0$, both endpoints are unsaturated, so $\theta_e(D_v,D_u)\ge\gamma$. Thus $\widehat{K}_e^2\le\Theta^2\le(\Theta^2/\gamma)\theta_e(D_v,D_u)$; the bound $\widehat{K}_e^2\le(\Theta^2/\gamma)\theta_e(D_v,D_u)$ also holds when $\widehat{K}_e=0$. By \eqref{eq:insertion},
\[
 \Ex[(1-X_e)\widehat{K}_e^2]
 \le\frac{\Theta^2}{\gamma}\Ex[(1-X_e)\theta_e(D_v,D_u)]
 =\frac{\Theta^2}{\gamma}p_e.
\]
Substituting yields $\Ex[\widehat{K}_e^2]\le\Theta^2(1+\gamma^{-1})p_e$, as required.

\medskip\noindent\emph{Approximation error.}
Recall that $L=\max\{\Theta,\Theta^b\}$ and $T=L/\min\{1,\gamma\}$. We will prove
\begin{equation}\label{eq:remainder-second}
 \Ex\left[\left(\sum_{e\ni v}a_eR_e\right)^2\right]
 \le32(b-1)T^2\sum_{e\ni v}p_ea_e^2.
\end{equation}
By \eqref{eq:remainder-support}, if $b_v=1$, all $R_e$ vanish. Otherwise, condition on the edges outside the star of $v$. Let $\mathcal A$ be the incident edges whose other endpoints are not saturated by this exterior, and put $d=|\mathcal A|$. Every subset $J\subseteq\mathcal A$ with $|J|\le b_v$ is feasible; edges outside $\mathcal A$ have $K_e=\widehat{K}_e=R_e=0$. All probabilities in the next calculation are conditional on this exterior, and $\mu(J)$ denotes the conditional probability of $J$. If $d=0$, there is nothing to prove.

By \cref{lem:surrogate-size}, $|K_e|\le L$, so $|R_e|\le2L$. By \eqref{eq:remainder-support} and Cauchy--Schwarz,
\[
 \Ex\left[\left(\sum_{e\ni v}a_eR_e\right)^2\right]
 \le4L^2d\Pr(D_v\le b_v-2)\sum_{e\in\mathcal A}a_e^2.
\]
Thus it suffices to prove, for every $e\in\mathcal A$,
\begin{equation}\label{eq:two-vacancies-probability}
 d\Pr(D_v\le b_v-2)
 \le\frac{8(b_v-1)}{\min\{1,\gamma\}^2}\Pr(X_e=1).
\end{equation}
Indeed, applying this estimate to each summand replaces the factor $d\Pr(D_v\le b_v-2)$ by a multiple of $\Pr(X_e=1)$. Averaging over the exterior then gives \eqref{eq:remainder-second}, using $b_v\le b$ and $T=L/\min\{1,\gamma\}$.

To prove \eqref{eq:two-vacancies-probability}, fix $e\in\mathcal A$ and count pairs $(J,f)$ with $|J|\le b_v-2$ and $f\in\mathcal A$ by the map
\[
 (J,f)\longmapsto H=J\cup\{e,f\}.
\]
At most two edges are inserted. The two vacancies at $v$ and the distinct other endpoints make both insertions feasible. Each has weight ratio at least $\gamma$, so
\[
 \mu(J)\le\frac{\mu(H)}{\min\{1,\gamma\}^2}.
\]
For each image $H$, we have $f\in H$ and
$H\setminus\{e,f\}\subseteq J\subseteq H$. There are at most four choices of $J$ for each $f$, hence at most $4|H|\le4b_v\le8(b_v-1)$ preimages. Summing the weight comparison over all pairs $(J,f)$ proves \eqref{eq:two-vacancies-probability}: the total source weight is $d\Pr(D_v\le b_v-2)$, and every image contains $e$.

\medskip\noindent\emph{Wrapping up.}
Substituting the boundary and error estimates into \eqref{eq:star-split}, we obtain
\[
 q_v\Var{}{\sum_{e\ni v}a_eK_e\mid A_v}
 \le A_\star\sum_{e\ni v}p_ea_e^2.
\]
To check the coefficient, if $b\ge2$, then $L\ge\Theta$ implies
\[
 \Theta^2(1+\gamma^{-1})\le2T^2
 \le2(b-1)T^2.
\]
Thus \eqref{eq:boundary-fluctuation} and \eqref{eq:remainder-second} give a coefficient of at most $72(b-1)T^2\le A_\star$.
If $b=1$, the error vanishes and $X_e=0$, $\widehat{K}_e=\theta_e(D_v,D_u)$ on $A_v$. Thus \eqref{eq:insertion} improves the boundary estimate to
$\Ex[\widehat{K}_e^2]\le\Theta p_e$, giving the coefficient $4\Theta=A_\star$.
Substituting into \eqref{eq:star-centering} proves \eqref{eq:source-local}.
\end{proof}

\section*{Acknowledgements}
We thank Zongchen Chen, Heng Guo, and Yitong Yin for helpful discussions.

\begingroup
\raggedright
\bibliographystyle{alpha}
\bibliography{refs}
\endgroup
\clearpage
\appendix
\section{An alternative proof of the main theorem}\label{app:alternative-proof}

Fix a feasible pinning $\tau$.
Work in this residual instance, writing $p_e=\Ex X_e$ and $q_v=\Pr(D_v<b_v)$ and omitting deterministic edges and vertices with no live incident edges. The one-state case is immediate; otherwise, consider the optimization problem
\[
 \rho=\max_{a\ne0}\frac{\Var{}{\sum_ea_eX_e}}{\sum_ep_ea_e^2}.
\]
This mean-normalized quantity is the local covariance quantity underlying spectral stability for field dynamics~\cite{CE22,CCYZ25,CCCYZ25}.

Choose a maximizer $a^\star$ with $\sum_ep_e(a_e^\star)^2=1$, so $\Var{}{\sum_ea_e^\star X_e}=\rho$.
Since a live edge is nonconstant, $\rho>0$.
We claim that
\begin{align} \label{eq:proof-thm:holant-claim}
\rho \sum_v\frac{\left(\sum_{e\ni v}p_e a^\star_e\right)^2}{q_v} \leq 2b^2.
\end{align}

We prove the claim below. Assuming it, \cref{lem:overview-transfer} applied to $a^\star$ gives
\[
 \rho\le C_1+C_2\sum_v\frac{\left(\sum_{e\ni v}p_ea_e^\star\right)^2}{q_v}
 \le C_1+\frac{2C_2b^2}{\rho}.
\]
Hence $\rho\le C_1+\sqrt{2C_2}\,b=O(1+\Theta+(b-1)T^2)$, using $T\ge1$.

Finally, the variance characterization of spectral independence uses the weights $p_e(1-p_e)$ in place of $p_e$. The insertion-odds bound gives $(1-p_e)^{-1}\le1+\Theta$, so this costs at most a factor $1+\Theta$. This proves \cref{thm:holant}, assuming \eqref{eq:proof-thm:holant-claim}.

It remains to prove \eqref{eq:proof-thm:holant-claim}. The vector $a^\star$ maximizes the variance subject to $\sum_ep_ea_e^2=1$. Differentiating the objective and constraint with respect to $a_e$ gives, for a Lagrange multiplier $\zeta$,
\[
 2\Cov\left(X_e,\sum_fa_f^\star X_f\right)=2\zeta p_ea_e^\star
\]
Multiplying by $a_e^\star/2$ and summing over $e$ identifies the multiplier:
\[
 \rho=\Var{}{\sum_ea_e^\star X_e}
 =\zeta\sum_ep_e(a_e^\star)^2=\zeta.
\]
Thus, for every edge $e$, the first-order optimality condition is
\begin{align} \label{eq:aux-optimal-cond}
 \Cov\left(X_e,\sum_fa_f^\star X_f\right)=\rho p_ea_e^\star.
\end{align}
For any vertex weights $h$, Cauchy--Schwarz and \eqref{eq:overview-degree} therefore imply
\begin{align*}
 \rho^2\left(\sum_vh_v\sum_{e\ni v}p_ea_e^\star\right)^2
 =\Cov\left(\sum_ea_e^\star X_e,\sum_vh_vD_v\right)^2
 \le\rho\Var{}{\sum_vh_vD_v}\le2\rho b^2\sum_vq_vh_v^2.
\end{align*}
If all vertex sums vanish, the claim is immediate. Otherwise, taking $h_v=q_v^{-1}\sum_{e\ni v}p_ea_e^\star$ and cancelling the positive factors gives \eqref{eq:proof-thm:holant-claim}.

\section{Mixing-time bounds for matching models}\label{sec:mixing}\label{app:matching-mixing}
We prove \cref{cor:matchings} by applying \cref{thm:field-comparison} and estimating the smallest positive configuration probability.

Write $\mu_s$ for the $b$-matching law at fugacity $s>0$. We use \cref{thm:small-fugacity-covariance} for the covariance bound at small fugacity and \cref{thm:field-comparison} to convert covariance bounds into a spectral gap.

\begin{proof}[Proof of \cref{cor:matchings}]
The spectral-independence assertions follow by substitution in \cref{thm:holant}. For monomer--dimer, take $b=1$ and $\gamma=\Theta=\lambda$, giving $O((1+\lambda)^2)$. For $b$-matchings at fugacity $\lambda>0$, take $\gamma=\Theta=\lambda$, so $T=\max\{1,\lambda^b\}$. Substitution gives the stated bound, which is $O(b)$ for uniform $b$-matchings ($\lambda=1$).

We next apply \cref{thm:field-comparison} to obtain the relaxation-time bounds. The $b$-matching law $\mu=\mu_\lambda$ has downward closed support, and every feasible insertion has weight ratio $\lambda$, so $r_{\mathrm{ins}}\le\lambda$. For every feasible occupied set $S$,
\[
 (1-\eta)*\mu^S=\mu_{\lambda(1-\eta)}^S.
\]
Thus it remains to bound the covariance of the residual matching law at fugacity $s=\lambda(1-\eta)$.

The proof of \cref{thm:holant} gives $\rho\le C_1+\sqrt{2C_2}\,b$. Using the explicit constants from \eqref{eq:edge-source}, we obtain
\[
 \Cov_{\mu_s^S}(X)\preceq A_{b,\lambda}\diag\bigl(\Ex_{\mu_s^S}X_e\bigr),
 \qquad 0<s\le\lambda,
\]
for every feasible occupied set $S$, where
\[
 A_{b,\lambda}=3+113\lambda+1800(b-1)\max\{1,\lambda^{2b}\}+4b.
\]
Indeed, residual capacities are at most $b$, and $T_b(s,s)=\max\{1,s^b\}\le\max\{1,\lambda^b\}$. Combining this estimate with \cref{thm:small-fugacity-covariance}, we verify the covariance hypothesis of \cref{thm:field-comparison} with
\[
 \eta_0=\max\left\{0,1-\frac{1}{2\lambda}\right\},\qquad
 c(\eta)=
 \begin{cases}
 A_{b,\lambda},&0<\eta<\eta_0,\\[2pt]
 \bigl[1-\lambda(1-\eta)\bigr]^{-1},&\eta_0\le\eta<1.
 \end{cases}
\]
The second branch applies because $\lambda(1-\eta)\le1/2$ there. The integral in \cref{thm:field-comparison} satisfies
\begin{align*}
 \int_0^1\frac{c(\eta)-1}{1-\eta}\,d\eta
 &=(A_{b,\lambda}-1)\log\frac{1}{1-\eta_0}
   -\log\bigl(1-\lambda(1-\eta_0)\bigr)\\
 &\le(A_{b,\lambda}-1)\log\max\{1,2\lambda\}+\log2.
\end{align*}
Hence \cref{thm:field-comparison} gives
\[
 \gap(P_{\mathrm{GD}})
 \ge\frac{1}{2(1+\lambda)\max\{1,2\lambda\}^{A_{b,\lambda}-1}m}
 =\Omega_{b,\lambda}(m^{-1}).
\]
Since heat-bath dynamics is positive semidefinite, $t_{\mathrm{rel}}=1/\gap(P_{\mathrm{GD}})=O_{b,\lambda}(m)$. For monomer--dimer, setting $b=1$ gives $A_{1,\lambda}=7+113\lambda$, and therefore $t_{\mathrm{rel}}=O_\lambda(m)$.

For uniform $b$-matchings, $\lambda=1$ and $A_{b,1}\le1804b$. Writing $A=1804b$, the same covariance estimates verify the hypothesis of \cref{thm:field-comparison} with $c(\eta)=\min\{A,\eta^{-1}\}$. Now
\begin{align*}
 \int_0^1\frac{c(\eta)-1}{1-\eta}\,d\eta
 &=-(A-1)\log(1-A^{-1})+\log A
 &\le1+\log A.
\end{align*}
Since $r_{\mathrm{ins}}\le1$, \cref{thm:field-comparison} gives
\[
 \gap(P_{\mathrm{GD}})\ge\frac{1}{2\mathrm e A m},
 \qquad t_{\mathrm{rel}}\le2\mathrm e A m=O(bm).
\]

Finally, let $\mu_{\min}$ be the smallest positive configuration probability. By \cref{prop:spectral-gap_mixing-time}, random-scan heat bath satisfies
\[
 t_{\mathrm{mix}}(\varepsilon)
 =O\left(t_{\mathrm{rel}}\left(\log\frac1{\mu_{\min}}
                                  +\log\frac1\varepsilon\right)\right).
\]
We now apply this bound to the two models. A matching can be encoded by listing each vertex's partner or recording that it is unmatched, so $|\mathcal M(G)|\le(n+1)^n$. Also,
\[
 \mu_{\min}^{-1}
 \le|\mathcal M(G)|\max\{\lambda,\lambda^{-1}\}^{n/2},
\]
which gives $\log(1/\mu_{\min})=O_\lambda(n\log n)$.

For $b$-matchings at fugacity $\lambda>0$, every feasible configuration has at most $k=\min\{m,bn/2\}$ edges, so
\[
 \mu_{\min}^{-1}\le |\mathcal M_{\mathbf b}|\max\{\lambda,\lambda^{-1}\}^{k}.
\]
There are at most $2^m$ edge subsets. Alternatively, encoding each vertex's selected neighbors in at most $b$ slots gives $|\mathcal M_{\mathbf b}|\le(n+1)^{bn}$. Hence
\[
 \log\frac1{\mu_{\min}}
 \le\min\{m\log2,bn\log(n+1)\}
       +\min\{m,bn/2\}|\log\lambda|.
\]
For fixed $\lambda>0$, this is $O_\lambda(\min\{m,bn\log n\})$. Combining these estimates with the relaxation-time bounds proves \eqref{eq:monomer-relaxation} and \eqref{eq:b-matching-relaxation}.

For uniform $b$-matchings, setting $\lambda=1$ removes the $|\log\lambda|$ term and gives
\[
 \log\frac1{\mu_{\min}}=O\bigl(\min\{m,bn\log n\}\bigr)
\]
with an absolute implied constant. Together with $t_{\mathrm{rel}}=O(bm)$, this proves \eqref{eq:uniform-relaxation}.
\end{proof}

\end{document}